    \def\ps@pprintTitle{%
      \let\@oddhead\@empty
      \let\@evenhead\@empty
      \def\@oddfoot{\reset@font\hfil\thepage\hfil}
      \let\@evenfoot\@oddfoot
    }
\newtheorem{theorem}{Theorem}
\newtheorem{corollary}{Corollary}
\newtheorem{criterion}{Criterion}
\newtheorem{example}{Example}
\numberwithin{equation}{section}
\newproof{proof}{Proof}
\newcommand{\E}{\mathop{\mathbb E}}
\newcommand{\p}{\mathop{\mathbf P}}
\begin{document}

\title{Market Share Indicates Quality}
\author{Amir Ban}
\ead{amirban@netvision.net.il}
\address{Center for the Study of Rationality, Hebrew University, Jerusalem, Israel}
\author{Nati Linial}
\ead{nati@cs.huji.ac.il} 
\address{School of Computer Science and Engineering, Hebrew University, Jerusalem, Israel}
\date{}


\begin{abstract}

Market share and quality, or customer satisfaction, go together. Yet inferring one from the other appears difficult. Indeed, such an inference would need detailed information about customer behavior, and might be clouded by modes of behavior such as herding (following popularity) or elitism, where customers avoid popular products. We investigate a fixed-price model where customers are informed about their history with products and about market share data. We find that it is in fact correct to make a Bayesian inference that the product with the higher market share has the better quality under few and unrestrictive assumptions on customer behavior.\\
\textit{JEL} codes: D11, D40, L10.

\end{abstract}

\maketitle

\section{Introduction}

Common wisdom holds that a full restaurant is a good one, or certainly better than its empty neighbor. The purpose of this paper is to discover some minimal assumptions on the rationality of customers under which this folk wisdom can be mathematically justified. For example, this conclusion certainly does not hold in places where the (admittedly strange) general preference is for food of poor taste.  We formulate a simple model of a market in which customers have several products available to them. Each product has an innate unknown {\em quality} which is the probability that a customer who consumes it is {\em satisfied}. We find very mild sufficient conditions under which a larger market share indicates higher product quality.

Intuitively, quality goes hand in hand with market share, and indeed a manufacturer's pursuit of quality is usually rationalized as a way of maximizing economic benefit. While not conclusive, empirical studies (e.g. Anderson et. al. \cite{Anderson}, Rust and Zahorik \cite{Rust}) generally support a positive correlation between quality and market share.

Market share affects real or perceived quality through various mechanisms, including the creation of network externalities. Yet the direction of this potential impact is ambiguous in the literature. An empirical study by Hellofs and Jacobson\cite{Hellofs} concludes ``The view that \ldots market share and product quality are reinforcing mechanisms \ldots seems premature''. Our study makes room for both positive and negative impact, through the modeling of ``herders'' and ``elitists''.

We model quality as a probability for customer satisfaction. We consider quality to be a hidden, constant attribute of a product, which may be inferred, but not directly observed or learned from an authority. This is a widespread scenario that customers realistically face. In the exceptions, e.g. when offered a Ferrari and a Fiat, price or other differentiation will typically also exist. The extent to which customers may remain unaware of quality has been studied by, e.g., Mitra and Golder\cite{Mitra}, who showed that only a fraction of actual quality change registers with customers, and then typically with a delay of years.

Caminal and Vives\cite{Caminal} studied the effect of firms' signaling of quality on imperfectly informed customers. The signaling is facilitated by strategic price reductions, which increase perceptions of quality by increasing market share. As customers best-respond with their own strategy, in one possible equilibrium market share indicates quality. Our motivation is different. We seek to illuminate the inter-relationship of market share and quality under the least possible restrictions on customer behavior, and independently of signaling, for several reasons: (i) A large body of economic research indicates that customers' decisions use {\em bounded} rationality. (ii) Strategic signaling of quality is absent when there is no interest or possibility of doing so, e.g. when prices are set by a regulator (health insurance), or by a non-profit (museum entry tickets), or by a channel (movie tickets), or due to the shortness of the interaction (shore restaurants serving a boatload of tourists stopping for lunch will inevitably have same prices). (iii) The sparseness of theoretical results, and the inconclusiveness of empirical results, increases the value of a basic result. Especially so since, as we will point out, and as the extensive literature on herding shows, there is nothing simple or inevitable about it.

We ascribe to each customer a strategy of whether to consume each of the products, guided by market share and by one's own past personal history. When described as a behavioral strategy, i.e. by probabilities for consuming the product for each of the customer's information sets, we call it the customer's {\em partiality} strategy for that product. No connection between consumption of different products is assumed. E.g., a customer may consume all products simultaneously or none. We make no assumption that customers have uniform strategies, or that any customer's strategy is optimal.

We make two mild assumptions on customers' strategies, that customers (i) do not prefer negative over positive experiences with products, and (ii) do not prefer products with low market share over those with a higher one. Additionally, we assume an undifferentiated market, where products are {\em a priori} equal in the eyes of customers. Under these assumptions we can show the validity of inferences from market share to quality.

The current work also sets the ground for future work that will include variable prices. A close inspection of our methodology reveals  its suitability for such a task. It is particularly encouraging to see its success in reaching general results from a few weak assumptions.  One of our results, that the probability for market leadership increases with quality (Theorem \ref{theorem2}) applies to variable-price markets. In the Discussion we suggest a path to extending the main result for variable prices.

Though our research is not framed in the canonical terms of action payoffs and Bayesian updating/learning, it in fact conforms to it: A preference for satisfaction over dissatisfaction is all we need to assume about payoffs, and, the monotonicity property, which is introduced in the next section, has a property common to all Bayesian updating rules: A strictly superior sequence of events results in a superior posterior.

\subsection{Strategy Dependence on Personal History}

When customers (illogically) prefer dissatisfaction over satisfaction, market share clearly does not indicate quality. There are multiple ways to exclude or limit this from our framework. For example, we could require that customers' strategies be consistent with their {\em average} satisfaction with products. But this is already restrictive: It is, for example, not unreasonable to prefer a product used satisfactorily 18 times out of 20 trials over a product used just once satisfactorily; Or to give more weight to more recent trials.

Therefore we adopt a tamer restriction, which we call {\em monotonicity}: Namely, that customers recall the outcomes of their experiences with products, in the order that they happened, and if that history is definitely superior, on an experience-by-experience basis, their partiality strategy\begin{samepage}\footnote{Here and hereafter we use ``partiality strategy'' as shorthand for the probability of choosing the action of consumption under that strategy.}\end{samepage} to the product will be equal or higher. As an example, if a customer has a {\em fail-success-fail} history with a product (on the 3 occasions she elected to use it), then her partiality strategy after such a history will not be higher than if her history would have been {\em fail-success-success}, as the latter is superior by having a success where the former has failure, and is otherwise the same. No restriction is made on strategy after, e.g., the history {\em success-fail-success} (incomparable on an experience-by-experience basis), or {\em success-success} (incomparable due to a different number of experiences). Nor does it restrict the customer's strategy to other products, or other customers' strategies, as each can be formed independently within our framework.

This is possibly the lightest restriction on customer strategies we could make that conforms to common sense. When customers' strategies are guided solely by their history, we show it is sufficient to establish our result. Even for this restricted scenario, the conclusion is deeper than suggested by the assertion's simplicity: Attempted proofs must deal with a side result to which we allude in the Discussion:  When quality varies with time, inferences from quality to market share or vice versa are, as a rule, invalid.

\subsection{Strategy Dependence on Market Share}

Customers may base their strategy on market share itself. This may take several forms:  Customers may be fully or partially informed of market share, by, e.g. knowing product sales figures, or the ranking of the top-selling products. Smallwood and Conlisk\cite{Smallwood} considered a market that evolves based on products having an intrinsic probability for breakdown and customers switching products randomly weighed by a function of market share. The present authors \cite{Ban} considered a system where customers are influenced by history and reputation, where ``reputation'' under a suitable choice of model parameters represents market share. Word-of-mouth, i.e. asking or following others, is in effect a sampling of market share. Ellison and Fudenberg\cite{Ellison} considered a model of learning involving both personal history and word-of-mouth communication in which technologies perform stochastically based on an underlying quality parameter. Information cascades, starting with Bikhchandani et. al.\cite{Bikhchandani}, consider the inferences that observers can make on the quality of a service based on the customers queuing for that service and how informed those customers are known to be. They show this leads to {\em herding}, the phenomenon where customers accumulate due to the presence of others. Smallwood and Conlisk\cite{Smallwood} as well as Ban and Linial\cite{Ban} also show that lower quality products can maintain higher market share indefinitely. Lest the reader suspect that the main result is self-evident when customers are Bayesian, since ``better quality always wins with Bayesian learning", note that this assumption is refuted by the above cited papers which are framed in strictly Bayesian terms. This false intuition is dispelled as well by the $k$-armed bandit problem, where an unlucky start may cause a Bayesian player to miss his best option for a long time or even indefinitely.

Another potential feature of customer behavior we call {\em elitism}: Customers who intentionally avoid the most popular products. This may be due to a wish to differentiate oneself from the crowd, or to a belief that popular choices are second-rate, or any other reason.

Herding and elitism seem to cast doubt on our thesis. Herding, in particular, seems to pull the rug from underneath our sought conclusion. For example, if more than half of the customers at any point in time consume the market-leading product, and it alone, then market leadership is self-perpetuating regardless of product qualities and regardless of how other customers behave. No monotonicity assumptions are violated (for added credibility, assume leader-following customers are one-shot with no experiences to rely on), yet market share indicates nothing regarding product qualities.  However, as we demonstrate, herding poses no problem to a market-to-quality inference: While a lower-quality product may sometimes prevail in market share, this will always have a lower probability than the alternative, and market share data {\em per se} is of no help in recognizing that such an anomaly is occurring. Defining a customer to be {\em weakly herding} if greater market share makes her more likely to consume a product, or has no effect on her behavior, we demonstrate that when all customers are weakly herding (in addition to being monotone on their product histories), market share is a valid signal for quality.

As for elitism, we believe, but do not analyze in the current paper, that if  outweighed (in some sense) by herding, our thesis is still valid. Markets in which customer elitism is dominant turn out to be chaotic and difficult to analyze. In the Discussion we give an example where such a market does not adhere to our thesis. However, such markets seem far-fetched and so of low economic significance.

\subsection{No Other Differentiation}

Our result applies to undifferentiated markets, where all products are {\em a priori} equal in the eyes of customers. When customers distinguish between products by price, brand name, etc., or in captive markets, market share may be a reflection of the existing differentiation rather than of quality.

In our model, this translates to a requirement of {\em anonymity} of products in customer strategies, meaning that customers' strategies are invariant under a change of product labels.

\subsection{Organization of this paper}

The rest of this paper is organized as follows: Section \ref{themodel} and \ref{history} analyze markets where customers are guided entirely by their product history, with section \ref{themodel} devoted to describing the model and section \ref{history} stating and proving our proposition in such markets. Subsequently we analyze markets where customers are aware of product market share and take it into account, with section \ref{themodelplus} devoted to refining the model for such markets, while section \ref{marketshare} states and proves our proposition. Formally speaking the results in Sections \ref{themodelplus} and \ref{marketshare} subsume those of Sections \ref{themodel} and \ref{history}, but we feel that this organization of the material makes it easier for the reader to follow. Conclusions are given in section \ref{conclusion}.

\section*{Acknowledgement}

The original version of this paper was based on a more complicated argument. The proof method that we finally adopted was generously offered by an anonymous referee to whom we are very grateful.

\section{Basic Model, When Only History Matters}
\label{themodel}

In our model, customers make decisions regarding products in {\em rounds} of discrete time $t = 1,2,\ldots$. At each round, a customer has an action set $\{C,N\}$, where $C:=$ consume the product, $N := $ do not consume the product.  If she consumes the product, she will, with probability given by the product's quality $q \in [0,1]$, be satisfied, in which case the round is called an {\em $S$-round}, or else dissatisfied, in which case the round is called an {\em $F$-round}. If she chooses not to consume the product, the round is called an {\em $N$-round}. 

A customer's $t$-{\em deep history} with a product is a member of $\mathcal{H}_t := \{S, N, F\}^t$. The set of all histories are denoted $\mathcal{H} := \bigcup_{k=0}^\infty \mathcal{H}_k$. For $Z \in \mathcal{H}_t$ we mark $Z$'s {\em depth} $|Z| := t$. $Z$ is composed of {\em events}, $Z = (Z(1), \ldots, Z(t))$ with $Z(k)$ standing for the event in round $k$. $Z^k$ stands for the sequence of events until round $k$, i.e. $Z^k = (Z_1, \ldots, Z_k) \in \mathcal{H}_k$, so that $Z = Z^t$. 

The customer's {\em partiality strategy}, $\sigma: \mathcal{H} \to [0,1]$, is her behavioral strategy given her information set, which in this basic model is her history with the product at the time of decision. The partiality strategy is completely specified by specifying $\sigma(Z)$, the probability for action $C$, for each history $Z \in \mathcal{H}$.

For $V \in \{S,N,F\}$, $ZV$ stands for $(Z(1), \ldots , Z(t), V) \in \mathcal{H}_{t+1}$, with $Z(k)$ standing for the event in round $k$.

We further define $S(Z)$ (resp. $F(Z)$, $N(Z)$) as the number of $S$-rounds (resp. $F, N$-rounds) in $Z$, i.e., the number of indices $i$ for which $Z(i)=S$ (resp. $F, N$). The {\em consumption} of $Z$ is defined as $con(Z) := S(Z)+F(Z)$. The {\em digest} of $Z$, denoted $dig(Z) \in \mathcal{H}_{con(Z)}$, is defined as the history that we obtain when we omit all the $N$-rounds from $Z$ while maintaining the order of the remaining rounds.

Let $Z_1, Z_2 \in \mathcal{H}_t$, with $con(Z_1)=con(Z_2)$, and let $D_1 := dig(Z_1), D_2 := dig(Z_2)$. We say that $Z_1$ is {\em superior} to $Z_2$, denoted $Z_1 \succeq Z_2$ if there is no index $i$ for which $D_1(i)=F$ and $D_2(i)=S$.

A partiality strategy $\sigma(\cdot)$ is called {\em monotone} if 
\begin{equation}\label{define_monotone}
\sigma(Z_1) \geq \sigma(Z_2) \text{~whenever~} Z_1 \succeq Z_2.\footnote{In more general models where the customer's strategy is dependent on more information, e.g. history of other products, or market shares, the generalization of this will be ``all else being equal''. E.g. $\sigma(Z_1,X,Y,\ldots) \geq \sigma(Z_2,X,Y,\ldots) \text{~whenever~} Z_1 \succeq Z_2$, where $X,Y, \ldots$ denote any other information-set variables. See e.g. the definition of monotonicity in Section \ref{themodelplus}.}
\end{equation}

\section{The Main Theorem When Only Product History Matters}
\label{history}

In the current section we focus on the situation of a monotone partiality strategy that depends only on history. What can be said about the probability that the consumption up to time $t$, is $\ge x$ for arbitrary $t$ and $x$? As the following theorem shows, this probability is a non-decreasing function of the product quality $q$.

\begin{theorem}
\label{theorem1}
Fix a monotone partiality strategy $\sigma$, and nonnegative integers $t, x$. Then
\begin{equation}
\label{event1}
\frac{d}{dq} \p \Bigl[con(Z) \geq x | Z \in \mathcal{H}_t \Bigr] \geq 0
\end{equation}
where the probability space is  $\mathcal{H}$.
\end{theorem}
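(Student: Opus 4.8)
The plan is to prove the stronger statement that $\p[con(Z)\ge x\mid Z\in\mathcal H_t]$ is a non-decreasing function of $q$ on all of $[0,1]$. This suffices: for a fixed history $Z\in\mathcal H_t$ the probability that it occurs equals $c_Z\,q^{S(Z)}(1-q)^{F(Z)}$, where $c_Z\ge 0$ is the product of the relevant $\sigma(Z^{k-1})$ and $1-\sigma(Z^{k-1})$ factors and does not depend on $q$; hence $\p[con(Z)\ge x\mid Z\in\mathcal H_t]=\sum_{Z\in\mathcal H_t,\;con(Z)\ge x}c_Z\,q^{S(Z)}(1-q)^{F(Z)}$ is a polynomial in $q$, in particular differentiable, and a non-decreasing differentiable function has non-negative derivative.

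To prove monotonicity in $q$ I would couple the length-$t$ processes for two qualities $q\le q'$ on one probability space. Draw i.i.d.\ uniforms $U_1,\dots,U_t$ (decision variables) and $O_1,\dots,O_t\in[0,1]$ (outcome variables), and declare the outcome of the $m$-th realized consumption to be $S$ iff $O_m\le\theta$, where $\theta$ is the quality in force. Running the process at quality $\theta$ then means: at round $k$, if there have been $c$ consumptions so far, consume iff $U_k\le\sigma(Z_{k-1})$, and upon consuming append the $S/F$ outcome read off from $O_{c+1}$; this reproduces the law of the process. Writing $Z_k(\theta)$ for the resulting $k$-deep history, the whole theorem reduces to the pointwise inequality $con(Z_k(q'))\ge con(Z_k(q))$ for all $k\le t$, since then $\mathbf 1[con(Z_t(q'))\ge x]\ge\mathbf 1[con(Z_t(q))\ge x]$ and we take expectations. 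I would establish this by induction on $k$, with the central sub-observation: whenever $con(Z_{k-1}(q'))=con(Z_{k-1}(q))$ one has $Z_{k-1}(q')\succeq Z_{k-1}(q)$. Indeed the two histories then have equal depth and equal consumption, and for each $i$ the $i$-th digest entries are the $S/F$ readings of $O_i$ at qualities $q'$ and $q$; since $q\le q'$, $O_i\le q$ forces $O_i\le q'$, so the $q$-digest never shows $S$ where the $q'$-digest shows $F$, which is exactly superiority. Monotonicity of $\sigma$ then yields $\sigma(Z_{k-1}(q'))\ge\sigma(Z_{k-1}(q))$.

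The inductive step at round $k$ splits into cases. If the $q$-process does not consume, its count is unchanged while the $q'$-count does not decrease, so the inequality persists. If the $q$-process consumes and the two counts were already unequal (the $q'$-count larger), then the $q'$-count exceeds the $q$-count by at least one, which absorbs the increment of the $q$-count. If the $q$-process consumes and the counts were equal, the sub-observation and monotonicity give $U_k\le\sigma(Z_{k-1}(q))\le\sigma(Z_{k-1}(q'))$, so the $q'$-process consumes as well and the counts remain equal. In all cases $con(Z_k(q'))\ge con(Z_k(q))$, closing the induction. The step I expect to be the real obstacle is the middle case: once the higher-quality process has pulled \emph{ahead} in consumption count, the histories are no longer comparable under $\succeq$ and the monotonicity hypothesis says nothing, so a naive round-by-round domination cannot be maintained; what rescues the argument is precisely the slack — being ahead by one consumption is exactly enough to survive a round in which the lower-quality process consumes and the higher-quality one does not. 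This delicacy is the one the authors flag: the coupling's comparison of digest entries relies on a uniform threshold structure across rounds, which is what fails when quality is allowed to vary in time.
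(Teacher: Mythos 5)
Your proof is correct, and it reaches the paper's conclusion by the same high-level strategy: a monotone coupling of the $q$- and $q'$-processes under which the higher-quality history always has at least the consumption of the lower-quality one, pivoting on exactly the observation that equal consumption counts force digest superiority and hence $\sigma$-domination (this is the content of Criterion \ref{support}). Where you genuinely diverge is in how the coupling is built. The paper constructs the joint distribution abstractly: it splits each row marginal $\p[Z;q]$ across columns via conditional factors $g_1,h_1$, separately specifies a column-based splitting via $g_2,h_2$, and then must verify --- through a pairing of consumption rounds and a ratio computation --- that the two constructions coincide, so that a joint distribution with the required support actually exists. You instead realize the coupling explicitly by common random numbers: shared decision uniforms $U_k$ and shared outcome uniforms $O_m$ indexed by consumption count rather than by round, with outcome $S$ iff $O_m\le\theta$. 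This makes both marginals correct by construction (each $O_{c+1}$ is fresh given the history and the decision at the current round, so the conditional success probability is $\theta$) and turns the support property into a deterministic pointwise induction; your case analysis correctly identifies that the only delicate step, a round where the lower-quality process consumes while the higher-quality one does not, can only occur when the latter is already strictly ahead. Your route buys considerable economy and transparency, and it makes vivid why the argument fails for time-varying quality: $O_m$ is consulted at different rounds by the two processes, which is precisely the paper's ``pairing'' remark in the Discussion. What the paper's heavier machinery buys in return is a template that carries over to Theorem \ref{theorem2}, where the coupling must simultaneously push product 1's consumption up and the other products' consumptions down across interacting customers; whether the common-uniforms construction extends as cleanly there is less obvious, but for Theorem \ref{theorem1} itself your argument is complete.
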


\begin{proof}

We define a Markov chain on histories, i.e. a Markov chain with state space $\mathcal{H}$ that describes the possible transitions between histories and their probabilities. All transitions are from a member $Z \in \mathcal{H}_t$ to an extension $Z' \in \mathcal{H}_{t+1}$ with the following probabilities 
\begin{equation}
\label{markov}
       \begin{array}{ll}
           Z \to ZS & $with probability $ q\cdot \sigma(Z) \\
           Z \to ZN & $with probability $ 1 - \sigma(Z) \\
           Z \to ZF & $with probability $ (1-q)\cdot \sigma(Z)
     \end{array}
\end{equation}

Consider the probability of reaching $Z \in \mathcal{H}$ as we start from the empty history and move along the Markov chain. It is convenient to express this probability as $$\p[Z;q] := c(Z) Q(Z; q)$$ $$Q(Z; q) := q^{S(Z)} (1-q)^{F(Z)}.$$ We refer to $c(Z)$ as the {\em ex-ante} function corresponding to strategy $\sigma(\cdot)$. Following from \eqref{markov}, its value is recursively defined by:
\begin{equation}
c(Z^k) = \left\{
\label{ex-ante}
       \begin{array}{ll}
           1 & k = 0 \\
           \sigma(Z^{k-1}) c(Z^{k-1})  & Z(k) \neq N \\
           \bigl[1-\sigma(Z^{k-1})\bigr]c(Z^{k-1}) & Z(k) = N
     \end{array}
    \right.
\end{equation}

For example $c(FNSSN) = \sigma(\emptyset)[1-\sigma(F)]\sigma(FN)\sigma(FNS)[1-\sigma(FNSS)]$ where $\emptyset$ denotes the empty history. Observe that $c(Z)$ is a product of $|Z|$ factors. The factor has the form $\sigma(\cdot)$ for each consumption event, and $1 - \sigma(\cdot)$ where the history has an $N$-event. The arguments of $\sigma$ in the factors run over all $|Z|$ tails of $Z$.

Let $q', q \in [0,1]$ s.t. $q' > q$. For the proof, we will construct a joint distribution of two types of history:
\begin{enumerate}
\item The distribution of histories $\in \mathcal{H}_t$ under product quality $q$. Histories from this distribution are denoted by $Z$.
\item The distribution of histories $\in \mathcal{H}_t$ under product quality $q'$.
Histories from this distribution are denoted by $Z'$.
\end{enumerate}

The joint distribution is defined for the purpose of our analysis. It should be considered a proof technique and not a representation of any real customer behavior.

The construction assigns to each pair of histories $Z, Z' \in \mathcal{H}_t$ a joint probability $f(Z,Z') \in [0,1]$. Assuming $f(Z,Z')$ to be tabulated in a table whose rows correspond to values of $Z$ and columns correspond to values of $Z'$, a valid joint distribution must have rows summing to the correct history event probabilities in both rows and columns, i.e. for every $Z \in \mathcal{H}_t$:
\begin{equation}
\label{row_marginal}
\sum\limits_{Z' \in \mathcal{H}_t} f(Z,Z') = \p[Z;q] = c(Z) Q(Z; q)
\end{equation}

\noindent and for every $Z' \in \mathcal{H}_t$:
\begin{equation}
\label{column_marginal}
\sum\limits_{Z \in \mathcal{H}_t} f(Z,Z') = \p[Z';q'] = c(Z') Q(Z'; q')
\end{equation}

Our construction restricts the joint distribution's support (i.e. those pairs $Z, Z'$ where $f(Z,Z') > 0$) by the following criterion:

\begin{criterion}
\label{support}
For every $0 < k \leq t$, there exists $0 < l \leq k$ s.t. $dig(Z'^l) \succeq dig(Z^k)$.

Consequently, for every $0 < k \leq t$, $con(Z'^k) \geq con(Z^k)$, and $con(Z'^k) = con(Z^k) \Rightarrow Z'^k \succeq Z^k \Rightarrow \sigma(Z'^k) \geq \sigma(Z^k)$.
\end{criterion}

We specify the joint distribution in two different ways: In the first, it is specified how the row marginal probabilities, i.e. probabilities of $Z$ events, are split into the row's individual probabilities. The construction will explicitly guarantee \eqref{row_marginal} and Criterion \ref{support}.  In the second, it is specified how the column marginal probabilities, i.e. probabilities of $Z'$ events, are split into the column's individual probabilities. The construction will explicitly guarantee \eqref{column_marginal} and Criterion \ref{support}. Finally, we will demonstrate that the two constructions coincide and yield the same probabilities, thus proving that it is in fact a joint probability fulfilling \eqref{row_marginal}, \eqref{column_marginal} and Criterion \ref{support}.

The first, row construction, is specified by:

Let $Z, Z', X, Y \in \mathcal{H}_t$ (we use $X$ in place of $Z$, and $Y$ in place of $Z'$ for better readability), and $0 < k \leq t$. We define below probability functions $g_1(X,Y,k), h_1(X,Y,k)$. Based on these, the probabilities $f(Z,Z')$ will be given by:
\begin{align}
\label{g_def}
& g(X,Y,k) :=  \left[
       \begin{array}{ll}
	 g_1(X,Y,k)	& Y(k) \neq N \\
	 1 - g_1(X,Y,k)	& Y(k) = N
     \end{array}
    \right] \\
\label{h_def}
& h(X,Y,k) =  \left[
       \begin{array}{ll}
	 h_1(X,Y,k)	& Y(k) = S \\
	 1 - h_1(X,Y,k)	& Y(k) = F \\
	1	& Y(k) = N
     \end{array}
    \right] \\
\label{r_def}
& r(X,Y,k) := g(X,Y,k) h(X,Y,k) \\
\label{R_def}
& R(Z,Z') := \prod\limits_{k=1}^t r(Z,Z',k) \\
\label{f_def}
& f(Z,Z') := R(Z,Z') \p[Z;q]
\end{align}

From Equations \eqref{g_def} to \eqref{r_def} we derive:
\begin{align}
r(X,Y,k) =  \left[
       \begin{array}{ll}
	 g(X,Y,k) h(X,Y,k)	& Y(k) = S \\
	 g(X,Y,k) [1 - h(X,Y,k)]	& Y(k) = F \\
	1-g(X,Y,k)	& Y(k) = N
     \end{array}
    \right]
\end{align}

\noindent The three values of which sum to $1$. It follows that for every given values of $Z$ and $Z'^{k-1}$ there holds $\sum r(Z,Z',k)=1$ where the sum is over  $Z'(k)=N, S$ and $F$. By \eqref{R_def} it follows that:
\begin{equation}
\sum\limits_{Z' \in \mathcal{H}_t} R(Z,Z') = 1
\end{equation}

\noindent from which \eqref{row_marginal} follows. Note that this holds regardless of the choice of $g_1, h_1$, which is made so as to guarantee that Criterion~\ref{support} holds. The definition of $g_1$ and $h_1$ follows:

Mark $c := con(X^{k-1})$, $c' := con(Y^{k-1})$:

\begin{equation}
\label{g1_def}
g_1(X,Y,k) =  \left[
       \begin{array}{ll}
\left[
       \begin{array}{ll}
	1 & X(k) \neq N \\
	\frac{\sigma(Y^{k-1}) - \sigma(X^{k-1})}{1 - \sigma(X^{k-1})} & X(k) = N \\
     \end{array}
    \right] & c = c' \\
	\sigma(Y^{k-1}) & c \neq c' \\
     \end{array}
    \right]
\end{equation}

\begin{equation}
\label{h1_def}
h_1(X,Y,k) =  \left[
       \begin{array}{ll}
\left[
       \begin{array}{ll}
	1 & [dig(X)](c+1) = S \\
	\frac{q'-q}{1-q}  & [dig(X)](c+1) = F \\
     \end{array}
    \right] & c < con(X) \\
	q'  & c \geq con(X) \\
     \end{array}
    \right]
\end{equation}

An examination of \eqref{g_def} to \eqref{h1_def} leads to the following observations:
\begin{itemize}

\item The definition of $g$ and $g_1$ insures that in $r$'s support $con(Z'^k) \geq con(Z^k)$ for every $k \in [1,t]$: In case $Z$ and $Z'$ have the same consumption at round $k-1$ and $X(k)\neq N$, there is zero probability that $Y(k)=N$. The definition of $h$ and $h_1$ insures that in $r$'s support nowhere do $l \in [1,con(Z)]$ $[dig(Z)](l) = S$ and $[dig(Z')](l) = F$ hold simultaneously: If $[dig(X)](c+1)=S$, then with probability zero does $[dig(Y)](c+1)=Y(k)$ equal $F$. This is exactly our notion of history superiority, denoted by the relation $\succeq$. Consequently Criterion \ref{support} holds in the support of $f$.
\item As $q' > q$, $0 \leq h(Z,Z',k) \leq 1$, and as Criterion \ref{support} specifies $Z'^k \succeq Z^k$ whenever $con(Z'^k) = con(Z^k)$,  $0 \leq g(Z,Z',k) \leq 1$. Therefore $0 \leq R(Z,Z') \leq 1$ for every $Z, Z'$ that satisfy Criterion \ref{support}.
\end{itemize}

In summary, \eqref{f_def}  defines a matrix $f(Z,Z')$ of non-negative values, which may be non-zero only where Criterion \ref{support} holds, each row summing to the row marginal probability of $Z$.

We now define a second construction based on columns (values of $Z'$). The overall structure of this construction is similar to the previous one.

Let $Z, Z', X, Y \in \mathcal{H}_t$, and $0 < k \leq t$. We define below probability functions $g_2(X,Y,k), h_2(X,Y,k)$. Based on these, the probabilities $f'(Z,Z')$ will be given by:
\begin{align}
\label{g'_def}
& g'(X,Y,k) :=  \left[
       \begin{array}{ll}
	 g_2(X,Y,k)	& X(k) \neq N \\
	 1 - g_2(X,Y,k)	& X(k) = N
     \end{array}
    \right] \\
\label{h'_def}
& h'(X,Y,k) =  \left[
       \begin{array}{ll}
	 h_2(X,Y,k)	& X(k) = S \\
	 1 - h_2(X,Y,k)	& X(k) = F \\
	1	& X(k) = N
     \end{array}
    \right] \\
\label{r'_def}
& r'(X,Y,k) := g'(X,Y,k) h'(X,Y,k) \\
\label{R'_def}
& R'(Z,Z') := \prod\limits_{k=1}^t r'(Z,Z',k) \\
\label{f'_def}
& f'(Z,Z') := R'(Z,Z') \p[Z';q']
\end{align}

As before, Equations \eqref{g'_def} to \eqref{h2_def} yield $\sum r'(Z,Z',k)=1$ where the sum is over  $Z(k)=N, S$ and $F$. By \eqref{R'_def} it follows that:
\begin{equation}
\sum\limits_{Z \in \mathcal{H}_t} R'(Z,Z') = 1
\end{equation}

\noindent from which \eqref{column_marginal} follows, irrespective of the choice of $g_2$ or $h_2$, which we define next:

Mark $c := con(X^{k-1})$, $c' := con(Y^{k-1})$:

\begin{equation}
\label{g2_def}
g_2(X,Y,k) =  \left[
       \begin{array}{ll}
\left[
       \begin{array}{ll}
	0 & Y(k) = N \\
	\frac{\sigma(X^{k-1})}{\sigma(Y^{k-1})} & Y(k) \neq N \\
     \end{array}
    \right] & c = c' \\
	\sigma(X^{k-1}) & c \neq c' \\
     \end{array}
    \right]
\end{equation}

\begin{equation}
\label{h2_def}
h_2(X,Y,k) =  \left[
       \begin{array}{ll}
\left[
       \begin{array}{ll}
	0 & [dig(Y)](c+1) = F \\
	\frac{q}{q'}  & [dig(Y)](c+1) = S \\
     \end{array}
    \right] & c < con(X) \\
	1 & c \geq con(X) \\
     \end{array}
    \right]
\end{equation}

The choice of $g_2$ guarantees that in $r'$'s support the consumption requirement of Criterion \ref{support} holds, while the choice of $h_2$ guarantees that the monotonicity requirement of Criterion \ref{support} also holds. As $q' > q$, $0 \leq h'(Z,Z',k) \leq 1$, and from monotonicity it follows that $0 \leq g'(Z,Z',k) \leq 1$. Therefore $0 \leq R'(Z,Z') \leq 1$ for every $Z, Z'$ that satisfy Criterion \ref{support}.

In summary, \eqref{f'_def}  defines a matrix $f'(Z,Z')$ of non-negative values, which may be non-zero only where Criterion \ref{support} holds, each row summing to the column marginal probability of $Z$.

We now show that both constructions generate the same probabilities, $f(Z,Z') = f'(Z,Z')$, and that therefore the construction constitutes a joint probability distribution whose support satisfies Criterion \ref{support}:

We use \eqref{g_def} and \eqref{g'_def} to calculate $\frac{g(Z,Z',k)}{g'(Z,Z',k)}$. Where Criterion \ref{support} holds:\footnote{In the case $c=c', Z(k) \neq N, Z'(k) = N$ this expression evaluates to $\frac{0}{0}$, but this case is ruled out by Criterion \ref{support}.}
\begin{equation}
\label{g_ratio}
\frac{g(Z,Z',k)}{g'(Z,Z',k)} =  \left[
       \begin{array}{ll}
	\frac{1 - \sigma(Z'^{k-1})}{1 - \sigma(Z^{k-1})}  & Z'(k) = N, Z(k) = N \\
	\frac{1 - \sigma(Z'^{k-1})}{\sigma(Z^{k-1})}  & Z'(k) = N, Z(k) \neq N \\
	\frac{\sigma(Z'^{k-1})}{1- \sigma(Z^{k-1})} & Z'(k) \neq N, Z(k) = N \\
	\frac{\sigma(Z'^{k-1})}{\sigma(Z^{k-1})} & Z'(k) \neq  N, Z(k) \neq N
     \end{array}
    \right] = \frac{c(Z'^k)}{c(Z'^{k-1})}\Bigr/\frac{c(Z^k)}{c(Z^{k-1})}
\end{equation}

We define a pairing $[1,t] \mapsto [1,t]$, based on $Z, Z'$ as follows:
\begin{itemize}
\item The rounds where $Z$ has a consumption event ($Z(k) \neq N$) are paired with the corresponding rounds of the first $con(Z)$ consumption events in $Z'$. I.e. if $con(Z^{k-1}) = con(Z'^{k'-1}) = l-1, con(Z^k) = con(Z'^{k'}) = l$, then $k$ is paired with $k'$.
\item Any $con(Z') - con(Z)$ $N$-rounds of $Z$ are paired with rounds of the $[con(Z)+1]$'th to $con(Z')$'th consumption event in $Z'$.
\item The remaining $t - con(Z')$ rounds of $Z$ and $Z'$ are all $N$-rounds, and are paired in any order.
\end{itemize}

We use \eqref{h_def} and \eqref{h'_def} to calculate $\frac{h(Z,Z',k)}{h'(Z,Z',k')}$ where $k \rightarrow k'$ is part of the above pairing. Where Criterion \ref{support} holds:
\begin{equation}
\label{h_ratio}
\frac{h(Z,Z',k)}{h'(Z,Z',k')} =  \left[
       \begin{array}{ll}
\left[
       \begin{array}{ll}
	1 & Z(k) = N
     \end{array}
    \right] & Z'(k') = N \\
\left[
       \begin{array}{ll}
	\frac{1-q'}{1-q} & Z(k) = F \\
	1-q' & Z(k) = N
     \end{array}
    \right] & Z'(k') = F \\
\left[
       \begin{array}{ll}
	\frac{q'}{1-q} & Z(k) = F \\
	\frac{q'}{q} & Z(k) = S \\
	q' & Z(k) = N
     \end{array}
    \right] & Z'(k') = S
     \end{array}
    \right] = \frac{Q(Z'^{k'}; q')}{Q(Z'^{k'-1}; q')}\Bigr/\frac{Q(Z^k; q)}{Q(Z^{k-1}; q)}
\end{equation}

Combining \eqref{R_def}, \eqref{R'_def}, \eqref{g_ratio} and \eqref{h_ratio}, we calculate:
\begin{equation}
\frac{R(Z,Z')}{R'(Z,Z')} = \frac{c(Z')}{c(Z)} \frac{Q(Z'; q')}{Q(Z; q)} = \frac{\p[Z'; q']}{\p[Z; q]}
\end{equation}

Recalling \eqref{f_def}, \eqref{f'_def} we conclude $f(Z,Z') = f'(Z,Z')$. Therefore the two constructions are the same, as we sought to show, and therefore form a joint distribution with support given by Criterion \ref{support}.

The theorem now follows: In the joint distribution matrix, consider a minor with rows for which $con(Z) \geq x$, and columns for which $con(Z') \geq x$. Since Criterion \ref{support} requires $con(Z') \geq con(Z)$, the minor contains all the support of each included row. Therefore:
\begin{equation}
\label{ineq}
\sum\limits_{con(Z) \geq x} \p[Z; q] \leq \sum\limits_{con(Z') \geq x} \p[Z'; q']
\end{equation}

Since this is true for any $q' > q$, $\p \Bigl[con(Z) \geq x | Z \in \mathcal{H}_t \Bigr]$ is non-decreasing in the product quality, as the theorem asserts.
\qed
\end{proof}

It follows from Theorem \ref{theorem1} that the expected consumption is non-decreasing in the quality:

\begin{corollary}
\label{expectcor}
If the partiality strategy is monotone, then for any time $t$:
\begin{equation}
\frac{d}{dq} \E \Bigl[con(Z) | Z \in \mathcal{H}_t \Bigr] \geq 0
\end{equation}
where the probability space is  $\mathcal{H}$.
\end{corollary}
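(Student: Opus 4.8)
The plan is to reduce the statement about the expectation to the family of tail-probability statements already established in Theorem~\ref{theorem1}, via the elementary ``layer-cake'' identity for nonnegative integer-valued random variables. Since $con(Z) = S(Z) + F(Z) \le |Z| = t$ for every $Z \in \mathcal{H}_t$, the random variable $con(Z)$, conditioned on $Z \in \mathcal{H}_t$, takes values in $\{0,1,\ldots,t\}$, and hence
\begin{equation}
\E\bigl[con(Z) \mid Z \in \mathcal{H}_t\bigr] = \sum_{x=1}^{t} \p\bigl[con(Z) \ge x \mid Z \in \mathcal{H}_t\bigr],
\end{equation}
a sum of exactly $t$ terms.

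Next I would differentiate this identity in $q$. Each summand equals $\sum_{Z \in \mathcal{H}_t,\ con(Z) \ge x} c(Z)\, q^{S(Z)}(1-q)^{F(Z)}$, a polynomial in $q$, so term-by-term differentiation of the finite sum is legitimate and yields
\begin{equation}
\frac{d}{dq}\E\bigl[con(Z) \mid Z \in \mathcal{H}_t\bigr] = \sum_{x=1}^{t} \frac{d}{dq}\p\bigl[con(Z) \ge x \mid Z \in \mathcal{H}_t\bigr].
\end{equation}
By Theorem~\ref{theorem1}, every term on the right-hand side is nonnegative, hence so is the sum, which is precisely the assertion.

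I do not expect a genuine obstacle here; the argument is routine. The only two points deserving a word of care are that $con(Z)$ is bounded by $t$ (so the tail sum is finite and no convergence or interchange-of-limit issue arises) and that each tail probability is a polynomial in $q$ (so differentiation commutes with the finite sum and Theorem~\ref{theorem1} applies verbatim to each term). Both facts are immediate from the definitions in Section~\ref{themodel}.
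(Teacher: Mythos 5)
Your proof is correct and follows essentially the same route as the paper's: the layer-cake identity $\E[con(Z)\mid Z\in\mathcal{H}_t]=\sum_x \p[con(Z)\ge x\mid Z\in\mathcal{H}_t]$ followed by termwise application of Theorem~\ref{theorem1}. The paper writes the sum with an infinite upper limit; your observation that it truncates at $t$ and that each term is a polynomial in $q$ only makes the interchange of differentiation and summation more explicitly justified.
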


\begin{proof}
Since:
\begin{equation}
\E \Bigl[con(Z) | Z \in \mathcal{H}_t \Bigr] = \sum\limits_{x=1}^\infty\p\Bigl[con(Z) \geq x | Z \in \mathcal{H}_t \Bigr]
\end{equation}
This follows from Theorem \ref{theorem1}.
\end{proof}

Having proved Theorem \ref{theorem1} for the consumption of a single customer, we state and prove an equivalent theorem for an entire market, i.e. that, provided each customer's partiality strategy is monotone, a product's market share (here defined as its total consumption) stochastically dominates the market share of a product of lesser quality:

\begin{theorem}
\label{market_history}
Let each of $n$ customers have a monotone partiality strategy. Given a time $t$ and an integer $x$:
\begin{equation}
\frac{d}{dq} \p \Bigl[\sum\limits_{j=1}^n con(Z_j) \geq x | \forall{j \in [n]}, Z_j \in \mathcal{H}_t \Bigr] \geq 0
\end{equation}
where the probability space is  $\mathcal{H}^n$.
\end{theorem}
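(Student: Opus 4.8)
The plan is to deduce the market statement directly from the single-customer Theorem~\ref{theorem1}, by combining the independence of customers with the coupling constructed in that proof. In the basic model a customer's information set is her own history with the (single) product, and no customer's strategy depends on any other customer, so, fixing the quality $q$, the joint law of $(Z_1,\ldots,Z_n)$ on $\mathcal{H}_t^n$ is simply the product $\prod_{j=1}^n \p[\,\cdot\,;q]$ of the single-customer laws of Section~\ref{themodel}. In particular $\p[\sum_j con(Z_j)\ge x \mid \forall j,\ Z_j\in\mathcal{H}_t]$ is a finite sum of products of terms $c(Z_j)Q(Z_j;q)$, hence a polynomial in $q$, so it is differentiable and the asserted derivative inequality is equivalent to this probability being non-decreasing in $q$.

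Next I would fix $q' > q$ and invoke, for each customer $j$ separately, the joint distribution $f_j(Z_j,Z'_j)$ on $\mathcal{H}_t\times\mathcal{H}_t$ built in the proof of Theorem~\ref{theorem1}: its $Z_j$-marginal is $\p[\cdot;q]$, its $Z'_j$-marginal is $\p[\cdot;q']$, and its support lies inside Criterion~\ref{support}, so that $con(Z'_j)\ge con(Z_j)$ with $f_j$-probability one. Take the product coupling $F := \prod_{j=1}^n f_j$ on $\mathcal{H}_t^n\times\mathcal{H}_t^n$. Its $(Z_1,\ldots,Z_n)$-marginal is the quality-$q$ market law and its $(Z'_1,\ldots,Z'_n)$-marginal is the quality-$q'$ market law, and $F$-almost surely $con(Z'_j)\ge con(Z_j)$ for every $j$ simultaneously, whence $\sum_{j=1}^n con(Z'_j)\ge\sum_{j=1}^n con(Z_j)$.

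Consequently, for every threshold $x$, under $F$ the event $\{\sum_j con(Z_j)\ge x\}$ is contained in $\{\sum_j con(Z'_j)\ge x\}$, so
\begin{equation}
\p\Bigl[\textstyle\sum_{j} con(Z_j)\ge x \,\Big|\, \forall j,\ Z_j\in\mathcal{H}_t\Bigr]
\;\le\;
\p\Bigl[\textstyle\sum_{j} con(Z'_j)\ge x \,\Big|\, \forall j,\ Z'_j\in\mathcal{H}_t\Bigr].
\end{equation}
Since this holds for every $q'>q$, the probability is non-decreasing in $q$, and being a polynomial in $q$ it then has non-negative derivative, which is exactly the claim.

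The only step that genuinely needs care is the first one: justifying that the $n$-customer law on $\mathcal{H}_t^n$ factors as the product of the single-customer laws, and hence that the product of the per-customer couplings is a legitimate coupling with the two required marginals and support inside Criterion~\ref{support}. Once that is in place, everything reduces to the elementary fact that a sum of independent, stochastically increasing integer random variables is itself stochastically increasing, and no estimates beyond those already proved in Theorem~\ref{theorem1} are needed. (One could instead avoid couplings altogether and induct on $n$, conditioning on $con(Z_n)$ and using that first-order stochastic dominance is preserved under convolution, with Theorem~\ref{theorem1} as the base case; but the product-coupling route is cleaner and keeps the bookkeeping minimal.)
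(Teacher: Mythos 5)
Your proposal is correct, but it takes a genuinely different route from the paper. The paper proves Theorem~\ref{market_history} by induction on $n$: writing $a_n(x)$ as the convolution of $a_{n-1}$ with the law of $con(Z_n)$, expanding $\p[con(Z_n)=y]$ as $b(y)-b(y+1)$, differentiating, and regrouping so that every term is a product of non-negative factors (the induction hypothesis supplies $\frac{d}{dq}a_{n-1}\geq 0$ and Theorem~\ref{theorem1} supplies $\frac{d}{dq}b\geq 0$) --- exactly the ``stochastic dominance is preserved under convolution'' argument you relegate to your closing parenthesis. Your main route instead takes the product $F=\prod_j f_j$ of the per-customer couplings built in the proof of Theorem~\ref{theorem1}, and this is legitimate: in the basic model each customer's strategy reads only her own history and satisfaction outcomes are independent across customers, so the $n$-customer law does factor, the product coupling has the two required marginals, and its support lies in the product of the individual supports, forcing $con(Z'_j)\geq con(Z_j)$ for all $j$ simultaneously and hence domination of the sums. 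What each approach buys: yours is shorter, makes the independence assumption explicit, and reuses the coupling already in hand with no new computation; the paper's induction works purely at the level of the one-dimensional distribution functions and so needs nothing about how the single-customer couplings combine --- which matters in Section~\ref{marketshare}, where customers are no longer independent (strategies depend on aggregate market share) and a product coupling is unavailable, so the paper must rebuild the coupling at the ensemble level anyway. One small point of care you correctly flag and should keep: the statement is phrased as a derivative, so you need the observation that the probability is a polynomial in $q$ to pass from monotonicity to $\frac{d}{dq}\geq 0$; also note the theorem allows the $n$ strategies to differ, which your construction handles since each $f_j$ is built from customer $j$'s own $\sigma_j$.
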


\begin{proof}
The proof proceeds by induction on $n$. For $n=1$ this is just Theorem \ref{theorem1}. Assume the statement true for up to $n-1$ customers. Mark $a_n(x) := \p \Bigl[\sum\limits_{j=1}^n con(Z_j) \geq x | \forall{j \in [n]}, Z_j \in \mathcal{H}_t \Bigr]$ and $b(y) := \p \Bigl[Z_n \geq y | Z_n \in \mathcal{H}_t \Bigr]$. Then:
\begin{align*}
\frac{d}{dq} a_n(x) &=  \frac{d}{dq} \sum\limits_{y=-\infty}^\infty a_{n-1}(x-y) \p \Bigl[Z_n = y | Z_n \in \mathcal{H}_t \Bigr] = \\
& =\frac{d}{dq} \sum\limits_{y=-\infty}^\infty a_{n-1}(x-y) b(y) - \frac{d}{dq} \sum\limits_{y=-\infty}^\infty a_{n-1}(x-y) b(y+1) = \\
& = \sum\limits_{y=-\infty}^\infty  \frac{d}{dq} a_{n-1}(x-y) b(y) + \sum\limits_{y=-\infty}^\infty a_{n-1}(x-y)  \frac{d}{dq} b(y) - \\
&  - \sum\limits_{y=-\infty}^\infty  \frac{d}{dq} a_{n-1}(x-y)  b(y+1)  - \sum\limits_{y=-\infty}^\infty a_{n-1}(x-y)   \frac{d}{dq}b(y+1)
\end{align*}

Changing variables in the last term $y+1 \rightarrow y$ and combining, this results in:
\begin{align*}
& \sum\limits_{y=-\infty}^\infty  \frac{d}{dq}  a_{n-1}(x-y) \p \Bigl[Z_n = y | Z_n \in \mathcal{H}_t \Bigr] + \\
& \sum\limits_{y=-\infty}^\infty \p \Bigl[\sum\limits_{j=1}^{n-1} con(Z_j) = x-y | \forall{j \in [n-1]}, Z_j \in \mathcal{H}_t \Bigr]  \frac{d}{dq} b(y) \geq 0
\end{align*}
since all factors in the above expression are non-negative.
\end{proof}

The main result can now be stated and proved: If two products are interchangeable in the eyes of the customers, and if there is no prior cause to believe that one of the products has the better quality, then from the observation of a higher market share for one of the products one can infer that it has the better quality.
\begin{theorem}
\label{theorem-history}
Let the partiality strategies of all customers for products $1$ and $2$ be monotone, and let each customer's strategy for product $1$ be the same as for product $2$. Let products $1,2$ have possibly different qualities $q_1, q_2$ respectively, with symmetric prior. Let the history of customer $j \in [n]$ with product $i \in [m]$ be $Z_{ij} \in \mathcal{H}_t$. Then:
\begin{equation*}
\p\Bigl[q_1 \geq q_2 | \sum\limits_{j=1}^n con(Z_{1j}) > \sum\limits_{j=1}^n con(Z_{2j})\Bigr] \geq \p\Bigl[q_2 \geq q_1 | \sum\limits_{j=1}^n con(Z_{1j}) > \sum\limits_{j=1}^n con(Z_{2j})\Bigr]
\end{equation*}
where the probability space is  $\mathcal{H}^{mn}$.
\end{theorem}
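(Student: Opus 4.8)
The plan is to reduce the asserted conditional inequality to a stochastic-dominance comparison that Theorem~\ref{market_history} already supplies, exploiting the symmetry between products $1$ and $2$. Only these two products enter the statement, and since in this model the histories of distinct products evolve independently the other $m-2$ products are irrelevant. Write $C_i:=\sum_{j=1}^n con(Z_{ij})$ for the market share of product $i$, an integer-valued random variable bounded by $nt$. Because the prior on $(q_1,q_2)$ is symmetric, every customer $j$ uses one and the same monotone strategy $\sigma_j$ for products $1$ and $2$, and the two products' consumption processes are conditionally independent given $(q_1,q_2)$, the random vectors $(q_1,q_2,C_1,C_2)$ and $(q_2,q_1,C_2,C_1)$ have the same law; hence $\p[q_2\ge q_1,\ C_1>C_2]=\p[q_1\ge q_2,\ C_2>C_1]$. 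Assuming $\p[C_1>C_2]>0$ (otherwise the conditioning event is null and there is nothing to prove) and dividing by it, the claim becomes equivalent to
\[
\p\bigl[q_1\ge q_2,\ C_1>C_2\bigr]\ \ge\ \p\bigl[q_1\ge q_2,\ C_2>C_1\bigr].
\]

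Next I would condition on the realized qualities: integrating against the prior, it suffices to prove, for every fixed pair $a\ge b$ in $[0,1]$, that
\[
\p\bigl[C_1>C_2\mid q_1=a,\ q_2=b\bigr]\ \ge\ \p\bigl[C_2>C_1\mid q_1=a,\ q_2=b\bigr].
\]
Conditioned on $q_1=a$ and $q_2=b$, the variables $C_1$ and $C_2$ are independent; $C_1$ has the law of the total market consumption produced by the common profile $(\sigma_j)_j$ facing a product of quality $a$, and $C_2$ the law at quality $b$. By Theorem~\ref{market_history} the probability that total consumption is at least $x$ is non-decreasing in the quality, so from $a\ge b$ we obtain $\p[C_1\ge x]\ge\p[C_2\ge x]$ for every $x$; in other words $C_1$ stochastically dominates $C_2$.

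It remains to establish the elementary fact that for \emph{independent} integer-valued random variables $X,Y$ with $X$ stochastically dominating $Y$ one has $\p[X>Y]\ge\p[Y>X]$. Writing $\bar F(t):=\p[X>t]$ and $\bar G(t):=\p[Y>t]$ for the two non-increasing tail functions, stochastic dominance gives $\bar F\ge\bar G$ pointwise, and, using independence,
\[
\p[X>Y]=\E\bigl[\bar F(Y)\bigr]\ \ge\ \E\bigl[\bar G(Y)\bigr]\ \ge\ \E\bigl[\bar G(X)\bigr]=\p[Y>X],
\]
the last inequality because $\bar G$ is non-increasing and $X$ stochastically dominates $Y$. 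Taking $X=C_1$ and $Y=C_2$ completes the proof.

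I expect this last lemma to be the only genuinely non-bookkeeping step, and independence is essential in it. The points deserving care are that the symmetrization really uses both the exchangeability of the prior and the coincidence of the two products' strategies, and that all sums above are finite since $con(\cdot)\le t$, so no convergence issue arises.
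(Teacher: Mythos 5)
Your proof is correct, but it executes the final step differently from the paper, and the difference is worth noting. The paper pivots through the diagonal: it invokes Theorem \ref{market_history} to claim $\p[\omega_1 > \omega_2 \mid q_1 \geq q_2] \geq \p[\omega_1 > \omega_2 \mid q_1 = q_2]$ and $\p[\omega_2 > \omega_1 \mid q_1 \geq q_2] \leq \p[\omega_2 > \omega_1 \mid q_1 = q_2]$, uses interchangeability on the event $q_1 = q_2$ to equate the two middle quantities, and then finishes with Bayes. You instead symmetrize first (reducing the claim to $\p[q_1 \geq q_2,\ C_1 > C_2] \geq \p[q_1 \geq q_2,\ C_2 > C_1]$), condition on the exact realized pair $(q_1,q_2)=(a,b)$ with $a \geq b$, and apply Theorem \ref{market_history} only at fixed quality values, where it literally asserts the stochastic dominance $\p[C_1 \geq x] \geq \p[C_2 \geq x]$; the lemma that independent stochastically ordered variables satisfy $\p[X>Y] \geq \p[Y>X]$ then closes the argument. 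Your route buys something real: the paper's comparison of conditional probabilities across the two \emph{different} conditioning events $\{q_1 \geq q_2\}$ and $\{q_1 = q_2\}$ is not a direct application of Theorem \ref{market_history}, since the prior may distribute mass over the off-diagonal region in a way that does not obviously preserve the inequality (the map $(a,b)\mapsto\p[C_1>C_2\mid a,b]$ is monotone in each coordinate separately, but that alone does not order its averages over the two regions). Your pointwise conditioning bypasses this delicacy entirely, at the cost of having to state explicitly where conditional independence of $C_1$ and $C_2$ is used --- which is a genuine feature of the history-only model and is exactly the right place to use it. Two small points deserving care, both of which you handle or can handle in one line: the degenerate case $\p[C_1 > C_2] = 0$, which you note, and the passage from the non-negative derivative in Theorem \ref{market_history} to monotonicity of the tail in $q$, which is immediate since that probability is a polynomial in $q$.
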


\begin{proof}
Mark $\omega_1 := \sum\limits_{j=1}^n con(Z_{1j})$ and $\omega_2 := \sum\limits_{j=1}^n con(Z_{2j})$.

As the products are interchangeable $\p[\omega_1 > \omega_2 | q_1=q_2] = \p[\omega_2 > \omega_1 | q_1=q_2]$. By theorem \ref{market_history}:
\begin{align}
\p[\omega_1 > \omega_2 | q_1 \geq q_2] \geq \p[\omega_1 > \omega_2 | q_1 = q_2] \\
\p[\omega_2 > \omega_1 | q_1 \geq q_2] \leq \p[\omega_2 > \omega_1 | q_1 = q_2]
\end{align}

Therefore:
\begin{equation}
\p[\omega_1 > \omega_2 | q_1 \geq q_2] \geq \p[\omega_2 > \omega_1 | q_1 \geq q_2]
\end{equation}

As the products are interchangeable, $\p[\omega_2 > \omega_1 | q_1 \geq q_2] = \p[\omega_1 > \omega_2 | q_2 \geq q_1]$, therefore:
\begin{equation}
\p[\omega_1 > \omega_2 | q_1 \geq q_2] \geq \p[\omega_1 > \omega_2 | q_2 \geq q_1]
\end{equation}

From which the theorem follows by Bayes' theorem and the symmetric prior on $q_1, q_2$.
\qed
\end{proof}

\section{Model with Market Share Observations}
\label{themodelplus}

We now generalize our model to the case where the market share of the products is known to customers. Customers can base their partiality strategies on market share information, as well as on their individual history with the products.
We need to define market share:

Let there be $n$ customers and $m$ products. $q_i$ denotes product $i$'s quality. Let $Z_{ij} \in \mathcal{H}_t$ be customer $j$'s $t$-deep history with product $i$. We define a ($t$-deep) {\em history ensemble} $Z$ as a set of histories for each customer-product combination $Z := \{Z_{ij} \in \mathcal{H}_t, \forall{i \in [m], j \in [n]}\}$.

The set of $t$-deep history ensembles is denoted by $\mathcal{G}_t$.

Given a $t$-deep history ensemble $Z$, and an initial market share $A := (A_1, \ldots, A_m)$, the market share of product $i \in [m]$ after round $k$ is the total number of units consumed of product $m$ up to round $k$, and is denoted by $\Omega_i(Z,A,k)$:
\begin{equation}
\label{market-share-def}
\Omega_i(Z,A,k) = A_i + \sum\limits_{j=1}^n con(Z_{ij}^k)
\end{equation}

All customers are aware of the round-$k$ market share of all products when they make their consumption decisions at round $k+1$.\footnote{See the Discussion for some comments on how things behave when customers may have more information about past market shares.}

$\Omega(Z,A,k)$ denotes the vector of all product market shares \\ $(\Omega_1(Z,A,k), \ldots, \Omega_m(Z,A,k))$.

The initial market share vector $A \equiv \Omega(Z,A,0)$ is the market share vector before round $1$, and its value is extraneous to the model.

The partiality strategy of customer $j \in [n]$ to product $i \in [m]$ after round $t$, \\ $\sigma_{ij}(Z_{ij},\Omega(Z,A,t))$, is defined as the probability that the customer will consume product $i$ at round $t+1$. It is a behavioral strategy that depends on the customer's information set which consists of $Z_{ij}$, the customer's history with product $i$, and the market share $\Omega(Z,A,t)$ known after round $t$.

Mark $q := (q_1, \ldots, q_m)$. Define:
\begin{align}
S_i(Z) & := \sum_{j=1}^n S(Z_{ij}) \\
F_i(Z) & := \sum_{j=1}^n F(Z_{ij}) \\
Q_i(Z;q) & := q_i^{S_i(Z)} (1-q_i)^{F_i(Z)} \\
Q(Z;q) & := \prod\limits_{i=1}^m Q_i(Z;q)
\end{align}

We define monotonicity similarly to how we defined it in Section \ref{themodel}: A partiality strategy $\sigma_{ij}$ is monotone if for every market share vector $\omega$ and for every history pair $Z_1, Z_2$ satisfying $Z_1 \succeq Z_2$, $\sigma_{ij}(Z_1,\omega) \geq \sigma_{ij}(Z_2,\omega)$.

We introduce a condition on customers' response to market data that we call {\em weak herding}. As we show, if weak herding holds and if customers are monotone, then a result similar to Theorem \ref{theorem-history} holds. Namely, market share still indicates quality.

A customer $j$ is called {\em weakly herding} if for every product $i$, time $t$, $t$-deep history $Z_{ij}$ and market share vector $\omega = \{\omega_1, \ldots, \omega_m\}$, $\sigma_{ij}(Z_{ij},\omega)$ is non-decreasing in $\omega_i$ (and is independent of $\omega_k$ for $k \neq i$). A customer $j$ is called {\em competitively weakly herding} if for every product $i$, time $t$, $t$-deep history $Z_{ij}$ and market share vector $\omega = \{\omega_1, \ldots, \omega_m\}$, $\sigma_{ij}(Z_{ij},\omega)$ is non-decreasing in $\omega_i$ {\em and} non-increasing in $\omega_k$ for all $k \neq i$. In particular, a customer who, as in our basic model, is unaware of market share or disregards it, is both weakly herding and competitively weakly herding.

Weak herding is a natural response to market share data: The more a product has been consumed, the more a customer who is aware of that fact is disposed to consume it. Competitive weak herding makes it possible to base partiality strategies on a product's {\em order} in market share data, e.g. on whether or not a product is the market leader in consumption. We shall be able to prove our thesis for both kinds of responses, though for competitively weak herding we shall have to limit the number of products.

Additionally, we define {\em anonymity} for products, the property that products are {\em a priori} equal in the eyes of customers. For anonymous products, partiality strategies do not depend on the label of a product but only on its data. Formally, let $\omega = (\omega_1, \omega_2, \ldots, \omega_m)$. Define the permutation $K_{12}(\omega) := (\omega_2, \omega_1, \omega_3, \ldots, \omega_m)$. Then products $1, 2$ are {\em anonymous} if, for each customer $j$, for each history $Z \in \mathcal{H}$ and for each market share vector $\omega$:
\begin{equation}
\sigma_{1j}(Z,\omega) = \sigma_{2j}(Z,K_{12}(\omega))
\end{equation}

\section{Theorem with Market Share Observation}
\label{marketshare}

\begin{theorem}
\label{theorem2}
\begin{enumerate}
\item Assume that all customers are monotone and weakly herding and fix some initial conditions $A = (A_1, \ldots, A_m)$. Then for all times $t$:
\begin{equation}
\frac{d}{dq_1} \p \Bigl[\Omega_1(Z,A,t) > \Omega_2(Z,A,t) | Z \in \mathcal{G}_t \Bigr] \geq 0
\end{equation}
where $q_i \in [0,1]$ is the quality of product $i$ for $i=1,\ldots,m$, and the probability space is $\mathcal{G}$.
\item The same holds when all customers are competitively weakly herding, rather than weakly herding, and there are two products ($m=2$).
\end{enumerate}
\end{theorem}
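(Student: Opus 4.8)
The plan is to prove both parts by a coupling argument in the spirit of the proof of Theorem~\ref{theorem1}. Fix the initial vector $A$, the qualities $q_2,\dots,q_m$, and a pair $q_1'>q_1$ in $[0,1]$; call the market in which product~$1$ has quality $q_1$ the \emph{unprimed} world and the one in which product~$1$ has quality $q_1'$ the \emph{primed} world. For fixed $A$ and $t$ the quantity $\p[\Omega_1(Z,A,t)>\Omega_2(Z,A,t)\mid Z\in\mathcal{G}_t]$ is a polynomial in $q_1$: every ensemble $Z\in\mathcal{G}_t$ is reached with probability equal to a constant (a product of $\sigma$- and $(1-\sigma)$-factors, whose arguments are determined once $Z$ and $A$ are fixed) times $\prod_{i} q_i^{S_i(Z)}(1-q_i)^{F_i(Z)}$. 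Hence the probability is differentiable in $q_1$, and it suffices to exhibit, for each such pair, a joint law of an unprimed ensemble $Z$ and a primed ensemble $Z'$, each with its correct marginal on $\mathcal{G}_t$, under which almost surely $\Omega_1(Z,A,t)>\Omega_2(Z,A,t)$ implies $\Omega_1(Z',A,t)>\Omega_2(Z',A,t)$. This gives $\p[\,\cdot\mid q_1']\ge\p[\,\cdot\mid q_1]$ for every $q_1'>q_1$, whence a non-negative derivative.

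For part~(1), weak herding makes the $m$ product-subsystems independent, the histories and market share of product~$i$ evolving through $q_i$, the product-$i$ histories, and $\omega_i$ alone. Couple every product $i\neq 1$ identically in the two worlds (same quality, same realized randomness), so that $Z_{ij}=Z_{ij}'$ for all $j$, and in particular $\Omega_2(Z,A,t)=\Omega_2(Z',A,t)$. For product~$1$, run the coupling of the proof of Theorem~\ref{theorem1} with two changes (the convolution argument of Theorem~\ref{market_history} is unavailable here, since weak herding couples the customers' product-$1$ chains through their shared market share): all $n$ customers are coupled in parallel, their product-$1$ chains sharing the common market-share value at each round; and each partiality factor now carries the extra argument $\omega_1$, equal to the coupled value $\Omega_1(Z,A,k-1)$ in the unprimed chain and to $\Omega_1(Z',A,k-1)$ in the primed one. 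The round-by-round construction~\eqref{g1_def}--\eqref{h2_def} then carries over once its one structural inequality is re-verified: whenever customer~$j$'s two product-$1$ histories through round~$k-1$ have equal consumption and the primed one is superior ($\succeq$) to the unprimed one, the primed partiality is at least the unprimed one — because weak herding raises the value when the market-share argument is raised from $\Omega_1(Z,A,k-1)$ to $\Omega_1(Z',A,k-1)$ (this ordering being the ``consumption'' half of the per-customer analogue of Criterion~\ref{support}, maintained inductively), after which monotonicity absorbs the histories. The analogue of Criterion~\ref{support} is then preserved round by round exactly as in Theorem~\ref{theorem1}, and summing the per-customer consumption inequalities over $j$ yields $\Omega_1(Z',A,k)\ge\Omega_1(Z,A,k)$ for all $k\le t$. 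Thus $\Omega_1(Z,A,t)>\Omega_2(Z,A,t)=\Omega_2(Z',A,t)$ forces $\Omega_1(Z',A,t)\ge\Omega_1(Z,A,t)>\Omega_2(Z',A,t)$.

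For part~(2), with $m=2$, product~$2$ cannot be coupled away: competitive weak herding makes a customer's product-$1$ partiality non-increasing in $\omega_2$ and her product-$2$ partiality non-increasing in $\omega_1$; but with only two products nothing else interferes. Couple all four families of histories $\{Z_{1j}\}$, $\{Z_{2j}\}$, primed and unprimed, in a single construction, maintaining for every $k\le t$ the invariant $\Omega_1(Z',A,k)\ge\Omega_1(Z,A,k)$ and $\Omega_2(Z',A,k)\le\Omega_2(Z,A,k)$, together with the per-customer Theorem~\ref{theorem1}-type relation making each primed product-$1$ history superior and no-shorter-consumed than the corresponding unprimed one, and its mirror image (the two worlds exchanged and the common quality $q_2$ in place of $q_1',q_1$) making each unprimed product-$2$ history no-shorter-consumed than the primed one with digests agreeing on the common length. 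Under this invariant, at each round monotonicity together with ``$\sigma_{1j}$ non-decreasing in $\omega_1$, non-increasing in $\omega_2$'' gives primed product-$1$ partiality $\ge$ unprimed, and monotonicity together with ``$\sigma_{2j}$ non-decreasing in $\omega_2$, non-increasing in $\omega_1$'' gives primed product-$2$ partiality $\le$ unprimed; coupling the consumption decisions monotonically (and, for product~$1$, the satisfactions via $q_1'>q_1$), with the mismatched-consumption bookkeeping handled by~\eqref{g1_def}--\eqref{h2_def} and its mirror, propagates the invariant to round $k+1$. At $k=t$ the invariant gives $\Omega_1(Z',A,t)-\Omega_2(Z',A,t)\ge\Omega_1(Z,A,t)-\Omega_2(Z,A,t)$, so $\Omega_1(Z,A,t)>\Omega_2(Z,A,t)$ implies $\Omega_1(Z',A,t)>\Omega_2(Z',A,t)$, and we conclude exactly as in part~(1).

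The main obstacle is the same delicate point as in Theorem~\ref{theorem1}: verifying that the round-by-round recipe genuinely assembles into a joint distribution with the two prescribed marginals while confining its support to the superiority relation — that is, re-running the matching of the ``row'' and ``column'' constructions — now with the market-share argument threaded through every occurrence of $\sigma$ and, in part~(2), with products $1$ and $2$ driven simultaneously in opposite directions. Concretely, one must re-check that feeding $\sigma$ the larger value $\Omega_1(Z',A,\cdot)$ (and, in part~(2), the smaller value $\Omega_2(Z',A,\cdot)$) never violates the bounds $0\le g,h\le 1$ on which the construction rests, and that the pairing of consumption rounds used to define the $h$-factors remains well defined when the two worlds receive different market-share inputs. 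Everything else — the reduction to a fixed pair $q_1'>q_1$, the polynomiality remark, and the final comparison of probabilities — is routine.
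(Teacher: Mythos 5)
Your proposal follows essentially the same route as the paper's proof: a round-by-round coupling of the two ensembles in which product $1$'s primed histories are kept superior and no-shorter-consumed (so $\Omega_1(Z',A,k)\ge\Omega_1(Z,A,k)$), while the other products are coupled identically under weak herding and driven in the opposite direction (primed consumption $\le$ unprimed, digests agreeing on the common length) under competitive weak herding with $m=2$ — this is precisely the paper's Criterion on the support of its joint distribution, and your identification of where $m>2$ would break is the same. The only difference is that you defer the row/column matching verification as "routine," which the paper carries out explicitly with modified $g$- and $h$-factors; your invariants are the correct ones for that verification to go through.
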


\begin{proof}
Our proof is modeled on the proof of Theorem \ref{theorem1}, with changes:

Let $q := (q_1, q_2, \ldots, q_m)$ and $q' := (q'_1, q'_2, \ldots, q'_m)$ be two product quality vectors satisfying $q'_1 > q_1$ and $q_2 = q'_2, \ldots, q_m = q'_m$, so that $q'$ is different from $q$ only in having better quality for product $1$. For the proof, we will construct a  joint distribution of two types of history ensembles:
\begin{enumerate}
\item The distribution of history ensembles $\in \mathcal{G}_t$ under product qualities $q$. Ensembles from this distribution are denoted by Z.
\item The distribution of history ensembles $\in \mathcal{G}_t$ under product qualities $q'$. Ensembles from this distribution are denoted by Z'.
\end{enumerate}

As before, the joint distribution is a proof technique and not a representation of any real customer behavior.

The construction assigns to each pair of ensembles $Z, Z' \in \mathcal{G}_t$ a joint probability $f(Z,Z') \in [0,1]$. Assuming $f(Z,Z')$ to be tabulated in a table whose rows correspond to values of $Z$ and columns correspond to values of $Z'$, a valid joint distribution must have rows summing to the correct ensemble event probabilities in both rows and columns, i.e. for every $Z \in \mathcal{G}_t$:
\begin{equation}
\label{row_e_marginal}
\sum\limits_{Z' \in \mathcal{G}_t} f(Z,Z') = \p[Z;q] = c(Z) Q(Z; q)
\end{equation}

\noindent and for every $Z' \in \mathcal{G}_t$:
\begin{equation}
\label{column_e_marginal}
\sum\limits_{Z \in \mathcal{G}_t} f(Z,Z') = \p[Z';q'] = c(Z') Q(Z'; q')
\end{equation}

Our construction restricts the joint distribution's support (i.e. those pairs $Z, Z'$ where $f(Z,Z') > 0$) by the following criterion:

\begin{criterion}
\label{support_e}
For every customer $j \in [n]$, and round $0 < k \leq t$:
\begin{itemize}
\item There exists $0 < l \leq k$ s.t. $dig({Z'}_{1j}^l) \succeq dig(Z_{1j}^k)$.
\item For every product $i \in [m]$, $i \neq 1$,  there exists $0 < l \leq k$ s.t. $dig(Z_{ij}^l) = dig({Z'}_{ij}^k)$. (In the case of weak herding, {\em a fortiori}, $Z_{ij}^k = {Z'}_{ij}^k$).
\end{itemize}

The following statements are all consequences of the criterion:
\begin{itemize}
\item $con({Z'}_{1j}^k) \geq con(Z_{1j}^k)$, while for every $i \neq 1$, $con({Z'}_{ij}^k) \leq con(Z_{ij}^k)$
\item $con({Z'}_{1j}^k) = con(Z_{1j}^k) \Rightarrow {Z'}_{1j}^k \succeq Z_{1j}^k$,  while for every $i \neq 1$, $con({Z'}_{ij}^k) = con(Z_{ij}^k) \Rightarrow Z_{ij}^k \succeq {Z'}_{ij}^k$.
\item $\Omega_1(Z',A,k) \geq \Omega_1(Z,A,k)$, while for every $i \neq 1$, $\Omega_i(Z',A,k) \leq \Omega_i(Z,A,k)$.
\item If monotonicity and either weak herding or competitively weak herding hold, \\ $\sigma_{1j}(Z',\Omega(Z',A,k)) \geq \sigma_{1j}(Z,\Omega(Z,A,k))$.
\item If monotonicity and either weak herding (for any number of products) or competitively weak herding (for two products) hold, $\sigma_{ij}(Z',\Omega(Z',A,k)) \leq \sigma_{ij}(Z,\Omega(Z,A,k))$. Note that it is here that the case $m > 2$ of competitively weak herding breaks, since the presence of a 3rd product, whose market share possibly decreased from $Z$ to $Z'$, would not allow drawing this consequence.
\end{itemize}
\end{criterion}

We specify the joint distribution in two different ways: In the first, it is specified how the row marginal probabilities, i.e. probabilities of $Z$ events, are split into the row's individual probabilties. The construction will explicitly guarantee \eqref{row_e_marginal} and Criterion \ref{support_e}.  In the second, it is specified how the column marginal probabilities, i.e. probabilities of $Z'$ events, are split into the column's individual probabilties. The construction will explicitly guarantee \eqref{column_e_marginal} and Criterion \ref{support_e}. Finally, we will demonstrate that the two constructions coincide and lead to the same probabilities, thus proving that it is in fact a joint probability fulfilling \eqref{row_e_marginal}, \eqref{column_e_marginal} and Criterion \ref{support_e}.

The first, row construction, is specified by:

Let $Z, Z',X,Y \in \mathcal{G}_t$ (to improve readability $X$ will be used in place of $Z$, and $Y$ in place of $Z'$), $i \in [m]$, $j \in [n]$ and $0 < k \leq t$. We will define below probability functions $g_3(X,Y,i,j,k), h_3(X,Y,i,j,k)$. Based on these, the probabilities $f(Z,Z')$ will be given by:
\begin{align}
\label{g_e_def}
& g(X,Y,i,j,k) :=  \left[
       \begin{array}{ll}
	 g_3(X,Y,i,j,k)	& Y_{ij}(k) \neq N \\
	 1 - g_3(X,Y,i,j,k)	& Y_{ij}(k) = N
     \end{array}
    \right] \\
\label{h_e_def}
& h(X,Y,i,j,k) =  \left[
       \begin{array}{ll}
	 h_3(X,Y,i,j,k)	& Y_{ij}(k) = S \\
	 1 - h_3(X,Y,i,j,k)	& Y_{ij}(k) = F \\
	1	& Y_{ij}(k) = N
     \end{array}
    \right] \\
\label{r_e_def}
& r(X,Y,i,j,k) := g(X,Y,i,j,k) h(X,Y,i,j,k) \\
\label{R_e_def}
& R(Z,Z') := \prod\limits_{i=1}^m \prod\limits_{j=1}^n \prod\limits_{k=1}^t r(Z,Z',i,j,k) \\
\label{f_e_def}
& f(Z,Z') := R(Z,Z') \p[Z;q]
\end{align}

From Equations \eqref{g_e_def} to \eqref{r_e_def} we derive:
\begin{align}
r(X,Y,i,j,k) =  \left[
       \begin{array}{ll}
	 g(X,Y,i,j,k) h(X,Y,i,j,k)	& Y_{ij}(k) = S \\
	 g(X,Y,i,j,k) [1 - h(X,Y,i,j,k)]	& Y_{ij}(k) = F \\
	1-g(X,Y,i,j,k)	& Y_{ij}(k) = N
     \end{array}
    \right]
\end{align}

\noindent The three values of which sum to $1$. It follows that for every given values of $Z_{ij}$ and ${Z'}_{ij}^{k-1}$ there holds  $\sum r(Z,Z',i,j,k) = 1$ where the sum is over $Z'_{ij} = N, S$ and $F$. By \eqref{R_e_def} it follows that:
\begin{equation}
\sum\limits_{Z' \in \mathcal{G}_t} R(Z,Z') = 1
\end{equation}

\noindent from which \eqref{row_e_marginal} follows. Note that this holds regardless of the specific choice of $g_3, h_3$, which come in next.

Mark $c := con(X_{ij}^{k-1})$, $c' := con(Y_{ij}^{k-1})$, $\Sigma(Z,k) := \sigma_{ij}(Z_{ij}^k,\Omega(Z,A,k))$

\begin{equation}
\label{g3_def}
g_3(X,Y,i,j,k) =  \left[
       \begin{array}{ll}
\left[
       \begin{array}{ll}
\left[
       \begin{array}{ll}
	1 & X_{ij}(k) \neq N \\
	\frac{\Sigma(Y,k-1) - \Sigma(X,k-1)}{1 - \Sigma(X,k-1)} & X_{ij}(k) = N \\
     \end{array}
    \right] & i = 1 \\
\left[
       \begin{array}{ll}
	0 & X_{ij}(k) = N \\
	\frac{\Sigma(Y,k-1)}{\Sigma(X,k-1)} & X_{ij}(k) \neq N \\
     \end{array}
    \right] & i \neq 1
     \end{array}
    \right] & c = c' \\
	\Sigma(Y,k-1) & c \neq c' \\
     \end{array}
    \right]
\end{equation}

\begin{equation}
\label{h31_def}
h_{31}(X,Y,i,j,k) =  \left[
       \begin{array}{ll}
\left[
       \begin{array}{ll}
	1 & [dig(X_{ij})](c+1) = S \\
	\frac{q'_i-q_i}{1-q_i}  & [dig(X_{ij})](c+1) = F \\
     \end{array}
    \right] & c < con(X_{ij}) \\
	q'_i  & c \geq con(X_{ij}) \\
     \end{array}
    \right]
\end{equation}

\begin{equation}
\label{h3i_def}
h_{3i}(X,Y,i,j,k) =  \left[
       \begin{array}{ll}
\left[
       \begin{array}{ll}
	1 & [dig(X_{ij})](c+1) = S \\
	\frac{q'_i-q_i}{1-q_i}  & [dig(X_{ij})](c+1) = F \\
     \end{array}
    \right] & c < con(Y_{ij}) \\
	1  & c \geq con(Y_{ij}) \\
     \end{array}
    \right]
\end{equation}

\begin{equation}
\label{h3_def}
h_3(X,Y,i,j,k) =  \left[
       \begin{array}{ll}
	h_{31}(X,Y,i,j,k)	& i  = 1 \\
	h_{3i}(X,Y,i,j,k)	& i  \neq 1
     \end{array}
    \right]
\end{equation}

An examination of \eqref{g_e_def} to \eqref{h3_def} leads to the following observations:
\begin{itemize}
\item The definition of $g$ and $g_3$ insures that in $r$'s support $con({Z'}_{1j}^k) \geq con(Z_{1j}^k)$ for every $j \in [n]$, $k \in [1,t]$, and $con({Z'}_{ij}^k) \leq con(Z_{ij}^k)$ for every $i \in [2,m]$, $j \in [n]$, $k \in [1,t]$. The definition of $h$ and $h_3$ insures that in $r$'s support nowhere do $l \in [1,con(Z)]$ $[dig(Z)](l) = S$ and $[dig(Z')](l) = F$ hold simultaneously. This is the definition of history superiority, denoted by the relation $\succeq$. Consequently Criterion \ref{support_e} holds in the support of $f$.
\item The difference between \eqref{h31_def} and \eqref{h3i_def} (compare with \eqref{h1_def} and \eqref{h2_def}) stems from the fact that for product $1$, $X$'s digest is shorter or equal to $Y$'s, while for other products, it is $Y$'s digest that is shorter or equal.
\item As $q_i' \geq q_i$ for every $i \in [m]$, $0 \leq h(Z,Z',i,j,k) \leq 1$. Also, as a consequence of Criterion \ref{support_e} is that $\Sigma(Z',k-1) \geq \Sigma(Z,k-1)$ for product $1$ ($i = 1$), and $\Sigma(Z',k-1) \leq \Sigma(Z,k-1)$ for other products ($i \neq 1$),  $0 \leq g(Z,Z',i,j,k) \leq 1$. Therefore $0 \leq R(Z,Z') \leq 1$ for every $Z, Z'$ satisfying Criterion \ref{support_e}.
\end{itemize}

In summary, \eqref{f_e_def}  defines a matrix $f(Z,Z')$ of non-negative values, which may be non-zero only where Criterion \ref{support_e} holds, each row summing to the row marginal probability of $Z$.

We now define a second construction based on columns (values of $Z'$):

Let $Z, Z',X,Y \in \mathcal{G}_t$, $i \in [m]$, $j \in [n]$ and $0 < k \leq t$. We define below probability functions $g_4(X,Y,i,j,k), h_4(X,Y,i,j,k)$. Based on these, the probabilities $f(Z,Z')$ will be given by:
\begin{align}
\label{g'_e_def}
& g'(X,Y,i,j,k) :=  \left[
       \begin{array}{ll}
	 g_4(X,Y,i,j,k)	& X_{ij}(k) \neq N \\
	 1 - g_4(X,Y,i,j,k)	& X_{ij}(k) = N
     \end{array}
    \right] \\
\label{h'_e_def}
& h'(X,Y,i,j,k) =  \left[
       \begin{array}{ll}
	 h_4(X,Y,i,j,k)	& X_{ij}(k) = S \\
	 1 - h_4(X,Y,i,j,k)	& X_{ij}(k) = F \\
	1	& X_{ij}(k) = N
     \end{array}
    \right] \\
\label{r'_e_def}
& r'(X,Y,i,j,k) := g'(X,Y,i,j,k) h'(X,Y,i,j,k) \\
\label{R'_e_def}
& R'(Z,Z') := \prod\limits_{i=1}^m \prod\limits_{j=1}^n \prod\limits_{k=1}^t r'(Z,Z',i,j,k) \\
\label{f'_e_def}
& f'(Z,Z') := R'(Z,Z') \p[Z';q']
\end{align}

An examination of \eqref{g'_e_def} to \eqref{h4_def} leads to observations similar to those noted above for the first construction: There holds $\sum r'(Z,Z',k)=1$ where the sum is over $Z_{ij}(k) = N,S$ and $F$. By \eqref{R'_e_def} it follows that:
\begin{equation}
\sum\limits_{Z \in \mathcal{G}_t} R'(Z,Z') = 1
\end{equation}

\noindent from which \eqref{column_e_marginal} follows.
We define the functions $g_4$ and $h_4$:

Mark $c := con(X_{ij}^{k-1})$, $c' := con(Y_{ij}^{k-1})$, $\Sigma(Z,k) := \sigma_{ij}(Z_{ij}^k,\Omega(Z,A,k))$

\begin{equation}
\label{g4_def}
g_4(X,Y,i,j,k) =  \left[
       \begin{array}{ll}
\left[
       \begin{array}{ll}
\left[
       \begin{array}{ll}
	0 & Y_{ij}(k) = N \\
	\frac{\Sigma(X,k-1)}{\Sigma(Y,k-1)} & Y_{ij}(k) \neq N \\
     \end{array}
    \right] & i = 1 \\
\left[
       \begin{array}{ll}
	1 & Y_{ij}(k) \neq N \\
	\frac{\Sigma(X,k-1) - \Sigma(Y,k-1)}{1 - \Sigma(Y,k-1)} & Y_{ij}(k) = N \\
     \end{array}
    \right] & i \neq 1
     \end{array}
    \right] & c = c' \\
	\Sigma(X,k-1) & c \neq c' \\
     \end{array}
    \right]
\end{equation}

\begin{equation}
\label{h41_def}
h_{41}(X,Y,i,j,k) =  \left[
       \begin{array}{ll}
\left[
       \begin{array}{ll}
	0 & [dig(Y_{ij})](c+1) = F \\
	\frac{q_i}{q'_i}  & [dig(Y_{ij})](c+1) = S \\
     \end{array}
    \right] & c < con(X_{ij}) \\
	1  & c \geq con(X_{ij}) \\
     \end{array}
    \right]
\end{equation}

\begin{equation}
\label{h4i_def}
h_{4i}(X,Y,i,j,k) =  \left[
       \begin{array}{ll}
\left[
       \begin{array}{ll}
	0 & [dig(Y_{ij})](c+1) = F \\
	\frac{q_i}{q'_i}  & [dig(Y_{ij})](c+1) = S \\
     \end{array}
    \right] & c < con(Y_{ij}) \\
	q_i  & c \geq con(Y_{ij}) \\
     \end{array}
    \right]
\end{equation}

\begin{equation}
\label{h4_def}
h_4(X,Y,i,j,k) =  \left[
       \begin{array}{ll}
	h_{41}(X,Y,i,j,k)	& i  = 1 \\
	h_{4i}(X,Y,i,j,k)	& i  \neq 1
     \end{array}
    \right]
\end{equation}

The choice of $g_4$ guarantees that in $r'$'s support the market share requirements of Criterion \ref{support_e} holds, while the choice of $h_4$ guarantees that the monotonicity requirements of Criterion \ref{support_e} also holds. As $q'_i \geq q_i$, $0 \leq h'(Z,Z',i,j,k) \leq 1$, and from monotonicity and (competitive) weak herding it follows that $0 \leq g'(Z,Z',i,j,k) \leq 1$. Therefore $0 \leq R'(Z,Z') \leq 1$ for every $Z, Z'$ satisfying Criterion \ref{support_e}.

In summary, \eqref{g'_e_def} through \eqref{h4_def} define a matrix $f'(Z,Z')$ of non-negative values, which may be non-zero only where Criterion \ref{support_e} holds, each column summing to the column marginal probability of $Z'$.

We now show that both constructions generate the same probabilities, $f(Z,Z') = f'(Z,Z')$, and that therefore the construction is for a joint probability with Criterion \ref{support_e} for its support:

We use \eqref{g_e_def} and \eqref{g'_e_def} to calculate that where $Z, Z'$ obey Criterion \ref{support_e}:\footnote{i.e. except for the cases $c=c', i=1, Z(k) \neq N, Z'(k) = N$ and $c=c', i \neq 1, Z(k) = N, Z'(k) \neq N$, where the expression evalutes to $\frac{0}{0}$}
\begin{equation}
\label{g_e_ratio}
\frac{g(Z,Z',i,j,k)}{g'(Z,Z',i,j,k)} =  \left[
       \begin{array}{ll}
	\frac{1 - \Sigma(Z',k-1)}{1 - \Sigma(Z,k-1)}  & Z'_{ij}(k) = N, Z_{ij}(k) = N \\
	\frac{1 - \Sigma(Z',k-1)}{\Sigma(Z,k-1)}  & Z_{ij}'(k) = N, Z_{ij}(k) \neq N \\
	\frac{\Sigma(Z',k-1)}{1- \Sigma(Z,k-1)} & Z'_{ij}(k) \neq N, Z_{ij}(k) = N \\
	\frac{\Sigma(Z',k-1)}{\Sigma(Z,k-1)} & Z'_{ij}(k) \neq  N, Z_{ij}(k) \neq N
     \end{array}
    \right] = \frac{c_{ij}({Z'}_{ij}^k)}{c_{ij}({Z'}_{ij}^{k-1})}\Bigr/\frac{c_{ij}(Z_{ij}^k)}{c_{ij}(Z_{ij}^{k-1})}
\end{equation}

For each $i \in [m]$, $j \in [n]$, we define a pairing $[1,t] \mapsto [1,t]$, based on $Z_{ij}, Z'_{ij}$. The following describes the pairing for product $1$:
\begin{itemize}
\item The rounds where $Z_{ij}$  has a consumption event ($Z_{ij}(k) \neq N$) are paired with the corresponding rounds of the first $con(Z_{ij})$ consumption events in $Z'_{ij}$. I.e. if $con(Z_{ij}^{k-1}) = con({Z'}_{ij}^{k'-1}) = l-1, con(Z_{ij}^k) = con({Z'}_{ij}^{k'}) = l$, then $k$ is paired with $k'$.
\item Any $con(Z'_{ij}) - con(Z_{ij})$ $N$-rounds of $Z_{ij}$ are paired with rounds of the $[con(Z_{ij})+1]$'th to $con(Z'_{ij})$'th consumption event in $Z'_{ij}$.
\item The remaining $t - con(Z'_{ij})$ rounds of $Z_{ij}$ and $Z'_{ij}$ are all $N$-rounds, and are paired in any order.
\end{itemize}

For products other than $1$, use the above pairing, but exchange $Z_{ij}$ for $Z'_{ij}$, and vice versa (the reason being that it is $Z'_{ij}$ instead of $Z_{ij}$ that has the shorter digest).

We use \eqref{h_e_def} and \eqref{h'_e_def} to calculate that where $Z, Z'$ obey Criterion \ref{support_e}, and $k \rightarrow k'$ is part of the above pairing:
\begin{equation}
\label{h_e_ratio}
\frac{h(Z,Z',i,j,k')}{h'(Z,Z',i,j,k)} =  \left[
       \begin{array}{ll}
\left[
       \begin{array}{ll}
	\frac{1}{1-q_i} & Z_{ij}(k) = F \\
	\frac{1}{q_i} & Z_{ij}(k) = S \\
	1 & Z_{ij}(k) = N
     \end{array}
    \right] & Z_{ij}'(k') = N \\
\left[
       \begin{array}{ll}
	\frac{1-q'_i}{1-q_i} & Z_{ij}(k) = F \\
	\frac{1-q'_i}{q_i} & Z_{ij}(k) = S \\
	1-q'_i & Z_{ij}(k) = N
     \end{array}
    \right] & Z'_{ij}(k') = F \\
\left[
       \begin{array}{ll}
	\frac{q'_i}{1-q_i} & Z_{ij}(k) = F \\
	\frac{q'_i}{q_i} & Z_{ij}(k) = S \\
	q'_i & Z_{ij}(k) = N
     \end{array}
    \right] & Z'_{ij}(k') = S
     \end{array}
    \right] = \frac{Q({Z'}_{ij}^{k'}; q'_i)}{Q({Z'}_{ij}^{k'-1}; q'_i)}\Bigr/\frac{Q(Z_{ij}^k; q_i)}{Q(Z_{ij}^{k-1}; q_i)}
\end{equation}

Combining \eqref{R_e_def}, \eqref{R'_e_def}, \eqref{g_e_ratio} and \eqref{h_e_ratio}, we calculate:
\begin{equation}
\frac{R(Z,Z')}{R'(Z,Z')} = \frac{c(Z')}{c(Z)} \frac{Q(Z'; q')}{Q(Z; q)} = \frac{\p[Z'; q']}{\p[Z; q]}
\end{equation}

Recalling \eqref{f_e_def}, \eqref{f'_e_def} we conclude $f(Z,Z') = f'(Z,Z')$. Therefore the two constructions are the same, as we sought to show, and therefore form a joint distribution with support given by Criterion \ref{support_e}.

The theorem now follows: In the joint distribution matrix, consider a minor with rows for which $\Omega_1(Z,A,t) > \Omega_2(Z,A,t)$, and columns for which $\Omega_1(Z',A,t) > \Omega_2(Z',A,t)$. Since a consequence of Criterion \ref{support_e} is that $\Omega_1(Z',A,t) \geq \Omega_1(Z,A,t)$ and $\Omega_2(Z',A,t) \leq \Omega_2(Z,A,t)$, the minor contains all the support of each included row. Therefore:
\begin{equation}
\label{ineq_e}
\sum\limits_{\Omega_1(Z,A,t) > \Omega_2(Z,A,t) } \p[Z; q] \leq \sum\limits_{\Omega_1(Z',A,t) > \Omega_2(Z',A,t) } \p[Z'; q']
\end{equation}

Since this is true for any $q'_1 > q_1$,  $\p \Bigl[\Omega_1(Z,A,t) > \Omega_2(Z,A,t) | Z \in \mathcal{G}_t \Bigr]$  is non-decreasing in $q_1$, as the theorem asserts.
\qed
\end{proof}

We now state and prove the main result:

\begin{theorem}
\label{theorem-market-share}
\begin{enumerate}
\item Let there be $m$ products, and let products $1,2$ be anonymous but have possibly different qualities $q_1, q_2$ respectively, with symmetric prior on their quality and initial market share. Let all customers have monotone partiality strategies to these products, and let all customers be weakly herding. Let $\omega_1, \omega_2$ be the observed market share after time $t$ of $1$ and $2$, respectively. Then:
\begin{equation}
\p[q_1 \geq q_2 | \omega_1 > \omega_2] \geq \p[q_2 \geq q_1 | \omega_1 > \omega_2]
\end{equation}
\item The same holds when all customers are competitively weakly herding, rather than weakly herding, and there are two products ($m=2$).
\end{enumerate}\end{theorem}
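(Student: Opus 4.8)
The plan is to follow the short derivation used for Theorem~\ref{theorem-history}, with Theorem~\ref{theorem2} playing the role of Theorem~\ref{market_history} and anonymity playing the role of interchangeability. Abbreviate $\omega_1:=\Omega_1(Z,A,t)$ and $\omega_2:=\Omega_2(Z,A,t)$, and treat $q_1,q_2$ together with the initial shares $A_1,A_2$ as drawn from the symmetric prior, everything else held fixed. The first ingredient is the symmetry coming from anonymity: relabelling products $1$ and $2$ — which simultaneously swaps $q_1\leftrightarrow q_2$, $A_1\leftrightarrow A_2$ and the histories $Z_{1j}\leftrightarrow Z_{2j}$ of every customer — commutes with the transition probabilities of the Markov chain on ensembles (by the defining identity $\sigma_{1j}(Z,\omega)=\sigma_{2j}(Z,K_{12}(\omega))$) and leaves the prior invariant, hence is a measure-preserving involution under which $\omega_1\leftrightarrow\omega_2$ and $q_1\leftrightarrow q_2$. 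It yields both the ``tie'' identity $\p[\omega_1>\omega_2\mid q_1=q_2]=\p[\omega_2>\omega_1\mid q_1=q_2]$ and the relabelling identity $\p[\omega_2>\omega_1\mid q_1\ge q_2]=\p[\omega_1>\omega_2\mid q_2\ge q_1]$ (using $\p[q_1\ge q_2]=\p[q_2\ge q_1]$).

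The second ingredient is Theorem~\ref{theorem2}. Part~1 says that, with $q_2$ and all remaining data fixed, $q_1\mapsto\p[\omega_1>\omega_2]$ is non-decreasing, and re-running the coupling in its proof with the event $\{\Omega_1>\Omega_2\}$ replaced everywhere by $\{\Omega_1\ge\Omega_2\}$ — the minor argument still goes through since Criterion~\ref{support_e} forces $\Omega_1(Z',A,t)\ge\Omega_1(Z,A,t)$ and $\Omega_2(Z',A,t)\le\Omega_2(Z,A,t)$ on the support — shows likewise that $q_1\mapsto\p[\omega_1\ge\omega_2]$ is non-decreasing, so that $q_1\mapsto\p[\omega_2>\omega_1]$ is non-increasing. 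Averaging these two monotonicities over the conditional law of $q_1$ on the events $\{q_1\ge q_2\}$ and $\{q_1=q_2\}$ gives $\p[\omega_1>\omega_2\mid q_1\ge q_2]\ge\p[\omega_1>\omega_2\mid q_1=q_2]$ and $\p[\omega_2>\omega_1\mid q_1\ge q_2]\le\p[\omega_2>\omega_1\mid q_1=q_2]$; combining these with the tie identity yields $\p[\omega_1>\omega_2\mid q_1\ge q_2]\ge\p[\omega_2>\omega_1\mid q_1\ge q_2]$, and then the relabelling identity turns the right-hand side into $\p[\omega_1>\omega_2\mid q_2\ge q_1]$. Thus $\p[\omega_1>\omega_2\mid q_1\ge q_2]\ge\p[\omega_1>\omega_2\mid q_2\ge q_1]$, and Bayes' rule with the symmetric prior on $q_1,q_2$ gives $\p[q_1\ge q_2\mid\omega_1>\omega_2]\ge\p[q_2\ge q_1\mid\omega_1>\omega_2]$, exactly as in the proof of Theorem~\ref{theorem-history}. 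Part~2 is identical, invoking part~2 of Theorem~\ref{theorem2} (and its $\ge$-variant), which is available precisely because $m=2$.

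The symmetry bookkeeping and the Bayes step are routine; the delicate point is the passage from Theorem~\ref{theorem2} to the two conditional monotonicities. Theorem~\ref{theorem2} is a statement about the \emph{strict} event $\{\Omega_1>\Omega_2\}$ with $q_1$ varied and everything else — in particular the initial shares $A_1,A_2$ — held fixed, whereas here I also need the \emph{non-strict} event $\{\Omega_1\ge\Omega_2\}$, so as to control $\p[\omega_2>\omega_1]$ and thereby market-share ties, and I need the monotonicity to survive the averaging over the symmetric prior on $A_1,A_2$ that is implicit in conditioning on $q_1=q_2$ rather than on a fixed pair $(A_1,A_2)$. I expect the main work to be checking that the coupling constructed in the proof of Theorem~\ref{theorem2} is compatible with the product-relabelling involution — i.e. that the joint distribution there may be taken symmetric in the two products whenever $q_1=q_2$ and the initial-share prior is symmetric — since it is this compatibility, rather than any new estimate, that legitimises the tie identity on which the chain of inequalities rests.
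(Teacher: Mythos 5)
Your proposal is correct and follows essentially the same route as the paper: the tie identity from anonymity and the symmetric prior, the two conditional monotonicity inequalities from Theorem~\ref{theorem2}, the relabelling identity, and Bayes' rule, exactly mirroring the proof of Theorem~\ref{theorem-history}. Your extra observation — that controlling $\p[\omega_2>\omega_1]$ requires the non-strict event $\{\Omega_1\ge\Omega_2\}$, which the coupling of Theorem~\ref{theorem2} handles since Criterion~\ref{support_e} forces $\Omega_1(Z',A,t)\ge\Omega_1(Z,A,t)$ and $\Omega_2(Z',A,t)\le\Omega_2(Z,A,t)$ on the support — is a point the paper leaves implicit, and you are right to make it explicit.
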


\begin{proof}
The observed market share is the result of some history ensemble $Z \in \mathcal{G}_t$ and some initial market share vector $A = (A_1, \ldots, A_m)$, such that:
\begin{align}
\omega_1 = \Omega_1(Z,A,t) \\
\omega_2 = \Omega_2(Z,A,t)
\end{align}

As the two products are anonymous, and the market share prior is symmetric, we must have $\p[\omega_1 > \omega_2 | q_1=q_2] = \p[\omega_2 > \omega_1 | q_1=q_2]$. By theorem \ref{theorem2}:
\begin{align}
\p[\omega_1 > \omega_2 | q_1 \geq q_2] \geq \p[\omega_1 > \omega_2 | q_1 = q_2] \\
\p[\omega_2 > \omega_1 | q_1 \geq q_2] \leq \p[\omega_2 > \omega_1 | q_1 = q_2]
\end{align}

Therefore:
\begin{equation}
\p[\omega_1 > \omega_2 | q_1 \geq q_2] \geq \p[\omega_2 > \omega_1 | q_1 \geq q_2]
\end{equation}

By anonymity, $\p[\omega_2 > \omega_1 | q_1 \geq q_2] = \p[\omega_1 > \omega_2 | q_2 \geq q_1]$, therefore:
\begin{equation}
\p[\omega_1 > \omega_2 | q_1 \geq q_2] \geq \p[\omega_1 > \omega_2 | q_2 \geq q_1]
\end{equation}

\noindent from which the theorem follows by Bayes' theorem and the symmetric prior on $q_1, q_2$.
\end{proof}

\section{Discussion and Conclusion}
\label{conclusion}

We proved that market share indicates quality in the context of a model where customers base their strategy on their history with products and on market share data, under fairly weak restrictions on their behavior.

One consequence of the result is its guidance to the behavior of the customers themselves: A new customer, with no previous experience of the products, is advised to put her trust in market share data available. In a market in which customer-product interaction is one-shot, and all customers are equally informed about market share, all rational customers should behave alike.

The framework we used in deriving our results can be naturally generalized. While the restrictions of monotonicity and weak herding were successful for reaching the result, we do not claim that our formulation is the only one possible. Different formulations may be attempted, perhaps introducing other considerations into customer strategy. It should be apparent that generalizing our result would require definition of criteria similar to Criteria \ref{support} and \ref{support_e}, and constructing distributions whose support adhere to these new criteria.

For example, we believe that the requirement that all customers obey weak herding is too strict, and that some level of elitist customer behavior does not, in itself, invalidate the result. Namely, so long as elitism is outweighed (in some sense, to be defined) by herding behavior, inferences from market share to quality remain valid.

If elitism becomes the norm, this may not be true, as is illustrated by the following simple albeit artificial example:

\begin{example}
Let there be two products $1$ and $2$, with $1$ the superior product:  $q_1 > q_2$. Let there be $n$ customers, divided into two categories. Customer $1$'s partiality strategy to each product is $1$ if she has no prior history with the product or if her last experience with it was good, and $0$ otherwise. Customer $1$'s strategy ignores market share. Customers $2$ to $n$, on the other hand, are pure elitists: They will consume a product unconditionally unless that product is a leader in market share, in which case they will not consume it. The initial market share is $A = \{0,0\}$.

We analyze this to show that, on the 3rd round, market share does not indicate quality.

On the 1st round all customers consume all products. The market share after the round is $\omega = \{n,n\}$, so there is no market leader.

On the 2nd round, customers $2$ to $n$ will consume all products. Customer $1$, however, will do so only if her last round was a success. With probability $q_1(1-q_2)$ product $1$ will lead in market share, while with a smaller probability $q_2(1-q_1)$ (since $q_1 > q_2$) product $2$ will lead in market share. In other cases (probability $q_1q_2 + (1-q_1)(1-q_2)$), market share parity will continue.

On the 3rd round, if any product has larger market share, the elitists will stop consuming it, but will continue consuming the other product. The result will be that at the end of the round, market leadership will be reversed. As $1$ is the better product, and assuming $n$ large enough, the conclusion is that at the end of round $3$ higher market share indicates lower quality.
\end{example}

This example is artificial, {\em inter alia}, in that the negative result depends on the round number. In the example, market share leadership will oscillate, and even rounds will behave differently from odd rounds. We have not been able to find an economically realistic scenario that invalidates our central thesis.

Another generalization is by the introduction of money, which does not play a role in our current model due to the assumption of an undifferentiated market. This may be readily achieved by factoring price into quality, so that customer satisfaction is tied to perceived ``value for money''. It is not necessary for all customers to have the same sensitivity to price: The generalization (``market share indicates value for money'') clearly holds if all customers, were they fully informed about product qualities, agree that a certain product is preferable to another.

We were able to prove the case with competitively weak herding for two products only. Whether the result in fact holds for 3 products or more needs to be clarified. If the answer turns out to be negative, it will be interesting whether an alternative for competitively weak herding exists in which customers are responsive to market share ranking and for which the main result holds for any number of products.

Our analysis assumed that customers are aware of current market share only. What happens if we allow for awareness of historic market share values? There are two aspects to this: First, in the influence of market share data on customer strategies: We defined weak herding as a restriction on customers' response to {\em current} market share. If this definition of weak herding remains unchanged, then a weak-herding customer, even if aware of historic market share data, may only use it as a tie-breaker in forming her  strategies. It should be apparent that this does not disturb the validity of the results, as stated. Second, and however, the statement ``market share indicates quality", interpreted as a statement about the probability of better quality conditional on market share data {\em not} restricted to current data, may be invalid. This is especially true if customers are aware of trends. E.g. the leader in a market may be seen to be losing ground, while the follower is seen to be gaining ground.

We modeled quality as an unchanging attribute of a product. What happens if quality varies between rounds? The main result becomes moot. However it may be asked whether the intermediate results that show that market share is monotonically non-decreasing in the quality (Theorems \ref{theorem1} and \ref{theorem2}) remain true. Differentials in the quality may be replaced by partial differentials in any particular round's quality. The answer to this turns out to be negative in the general case. (For observe, e.g., that the pairing used in \eqref{h_ratio} means that product qualities from different periods are compared. This may cause the construction to fail by having negative joint probabilities. Interestingly, this problem does not exist, and so our results still hold, if product qualities are variable but non-increasing in time).

\end{document}